\begin{document}

%%%% MACROS FOR NOTATION %%%%
% Use these for any notation where there are multiple options.

%%% Notes and exercise sections
\makeatletter
\newcommand{\sectionNotes}{\phantomsection\section*{Notes}\addcontentsline{toc}{section}{Notes}\markright{\textsc{\@chapapp{} \thechapter{} Notes}}}
\newcommand{\sectionExercises}[1]{\phantomsection\section*{Exercises}\addcontentsline{toc}{section}{Exercises}\markright{\textsc{\@chapapp{} \thechapter{} Exercises}}}
\makeatother

%%% Definitional equality (used infix) %%%
\newcommand{\jdeq}{\equiv}      % An equality judgment
\let\judgeq\jdeq
\newcommand{\defeq}{\vcentcolon\equiv}  % A judgmental equality currently being defined

%%% Term being defined
\newcommand{\define}[1]{\textbf{#1}}

%%% Vec (for example)

\newcommand{\Vect}{\ensuremath{\mathsf{Vec}}}
\newcommand{\Fin}{\ensuremath{\mathsf{Fin}}}
\newcommand{\fmax}{\ensuremath{\mathsf{fmax}}}
\newcommand{\seq}[1]{\langle #1\rangle}

%%% Dependent products %%%
\def\prdsym{\textstyle\prod}
%% Call the macro like \prd{x,y:A}{p:x=y} with any number of
%% arguments.  Make sure that whatever comes *after* the call doesn't
%% begin with an open-brace, or it will be parsed as another argument.
\makeatletter
% Currently the macro is configured to produce
%     {\textstyle\prod}(x:A) \; {\textstyle\prod}(y:B),{\ }
% in display-math mode, and
%     \prod_{(x:A)} \prod_{y:B}
% in text-math mode.
% \def\prd#1{\@ifnextchar\bgroup{\prd@parens{#1}}{%
%     \@ifnextchar\sm{\prd@parens{#1}\@eatsm}{%
%         \prd@noparens{#1}}}}
\def\prd#1{\@ifnextchar\bgroup{\prd@parens{#1}}{%
    \@ifnextchar\sm{\prd@parens{#1}\@eatsm}{%
    \@ifnextchar\prd{\prd@parens{#1}\@eatprd}{%
    \@ifnextchar\;{\prd@parens{#1}\@eatsemicolonspace}{%
    \@ifnextchar\\{\prd@parens{#1}\@eatlinebreak}{%
    \@ifnextchar\narrowbreak{\prd@parens{#1}\@eatnarrowbreak}{%
      \prd@noparens{#1}}}}}}}}
\def\prd@parens#1{\@ifnextchar\bgroup%
  {\mathchoice{\@dprd{#1}}{\@tprd{#1}}{\@tprd{#1}}{\@tprd{#1}}\prd@parens}%
  {\@ifnextchar\sm%
    {\mathchoice{\@dprd{#1}}{\@tprd{#1}}{\@tprd{#1}}{\@tprd{#1}}\@eatsm}%
    {\mathchoice{\@dprd{#1}}{\@tprd{#1}}{\@tprd{#1}}{\@tprd{#1}}}}}
\def\@eatsm\sm{\sm@parens}
\def\prd@noparens#1{\mathchoice{\@dprd@noparens{#1}}{\@tprd{#1}}{\@tprd{#1}}{\@tprd{#1}}}
% Helper macros for three styles
\def\lprd#1{\@ifnextchar\bgroup{\@lprd{#1}\lprd}{\@@lprd{#1}}}
\def\@lprd#1{\mathchoice{{\textstyle\prod}}{\prod}{\prod}{\prod}({\textstyle #1})\;}
\def\@@lprd#1{\mathchoice{{\textstyle\prod}}{\prod}{\prod}{\prod}({\textstyle #1}),\ }
\def\tprd#1{\@tprd{#1}\@ifnextchar\bgroup{\tprd}{}}
\def\@tprd#1{\mathchoice{{\textstyle\prod_{(#1)}}}{\prod_{(#1)}}{\prod_{(#1)}}{\prod_{(#1)}}}
\def\dprd#1{\@dprd{#1}\@ifnextchar\bgroup{\dprd}{}}
\def\@dprd#1{\prod_{(#1)}\,}
\def\@dprd@noparens#1{\prod_{#1}\,}

% Look through spaces and linebreaks
\def\@eatnarrowbreak\narrowbreak{%
  \@ifnextchar\prd{\narrowbreak\@eatprd}{%
    \@ifnextchar\sm{\narrowbreak\@eatsm}{%
      \narrowbreak}}}
\def\@eatlinebreak\\{%
  \@ifnextchar\prd{\\\@eatprd}{%
    \@ifnextchar\sm{\\\@eatsm}{%
      \\}}}
\def\@eatsemicolonspace\;{%
  \@ifnextchar\prd{\;\@eatprd}{%
    \@ifnextchar\sm{\;\@eatsm}{%
      \;}}}

%%% Lambda abstractions.
% Each variable being abstracted over is a separate argument.  If
% there is more than one such argument, they *must* be enclosed in
% braces.  Arguments can be untyped, as in \lam{x}{y}, or typed with a
% colon, as in \lam{x:A}{y:B}. In the latter case, the colons are
% automatically noticed and (with current implementation) the space
% around the colon is reduced.  You can even give more than one variable
% the same type, as in \lam{x,y:A}.
\def\lam#1{{\lambda}\@lamarg#1:\@endlamarg\@ifnextchar\bgroup{.\,\lam}{.\,}}
\def\@lamarg#1:#2\@endlamarg{\if\relax\detokenize{#2}\relax #1\else\@lamvar{\@lameatcolon#2},#1\@endlamvar\fi}
\def\@lamvar#1,#2\@endlamvar{(#2\,{:}\,#1)}
\def\@lameatcolon#1:{#1}
\let\lamt\lam
% This version silently eats any typing annotation.
\def\lamu#1{{\lambda}\@lamuarg#1:\@endlamuarg\@ifnextchar\bgroup{.\,\lamu}{.\,}}
\def\@lamuarg#1:#2\@endlamuarg{#1}

%%% Dependent products written with \forall, in the same style
\def\fall#1{\forall (#1)\@ifnextchar\bgroup{.\,\fall}{.\,}}

%%% Existential quantifier %%%
\def\exis#1{\exists (#1)\@ifnextchar\bgroup{.\,\exis}{.\,}}

%%% Dependent sums %%%
\def\smsym{\textstyle\sum}
% Use in the same way as \prd
\def\sm#1{\@ifnextchar\bgroup{\sm@parens{#1}}{%
    \@ifnextchar\prd{\sm@parens{#1}\@eatprd}{%
    \@ifnextchar\sm{\sm@parens{#1}\@eatsm}{%
    \@ifnextchar\;{\sm@parens{#1}\@eatsemicolonspace}{%
    \@ifnextchar\\{\sm@parens{#1}\@eatlinebreak}{%
    \@ifnextchar\narrowbreak{\sm@parens{#1}\@eatnarrowbreak}{%
        \sm@noparens{#1}}}}}}}}
\def\sm@parens#1{\@ifnextchar\bgroup%
  {\mathchoice{\@dsm{#1}}{\@tsm{#1}}{\@tsm{#1}}{\@tsm{#1}}\sm@parens}%
  {\@ifnextchar\prd%
    {\mathchoice{\@dsm{#1}}{\@tsm{#1}}{\@tsm{#1}}{\@tsm{#1}}\@eatprd}%
    {\mathchoice{\@dsm{#1}}{\@tsm{#1}}{\@tsm{#1}}{\@tsm{#1}}}}}
\def\@eatprd\prd{\prd@parens}
\def\sm@noparens#1{\mathchoice{\@dsm@noparens{#1}}{\@tsm{#1}}{\@tsm{#1}}{\@tsm{#1}}}
\def\lsm#1{\@ifnextchar\bgroup{\@lsm{#1}\lsm}{\@@lsm{#1}}}
\def\@lsm#1{\mathchoice{{\textstyle\sum}}{\sum}{\sum}{\sum}({\textstyle #1})\;}
\def\@@lsm#1{\mathchoice{{\textstyle\sum}}{\sum}{\sum}{\sum}({\textstyle #1}),\ }
\def\tsm#1{\@tsm{#1}\@ifnextchar\bgroup{\tsm}{}}
\def\@tsm#1{\mathchoice{{\textstyle\sum_{(#1)}}}{\sum_{(#1)}}{\sum_{(#1)}}{\sum_{(#1)}}}
\def\dsm#1{\@dsm{#1}\@ifnextchar\bgroup{\dsm}{}}
\def\@dsm#1{\sum_{(#1)}\,}
\def\@dsm@noparens#1{\sum_{#1}\,}

%%% W-types
\def\wtypesym{{\mathsf{W}}}
\def\wtype#1{\@ifnextchar\bgroup%
  {\mathchoice{\@twtype{#1}}{\@twtype{#1}}{\@twtype{#1}}{\@twtype{#1}}\wtype}%
  {\mathchoice{\@twtype{#1}}{\@twtype{#1}}{\@twtype{#1}}{\@twtype{#1}}}}
\def\lwtype#1{\@ifnextchar\bgroup{\@lwtype{#1}\lwtype}{\@@lwtype{#1}}}
\def\@lwtype#1{\mathchoice{{\textstyle\mathsf{W}}}{\mathsf{W}}{\mathsf{W}}{\mathsf{W}}({\textstyle #1})\;}
\def\@@lwtype#1{\mathchoice{{\textstyle\mathsf{W}}}{\mathsf{W}}{\mathsf{W}}{\mathsf{W}}({\textstyle #1}),\ }
\def\twtype#1{\@twtype{#1}\@ifnextchar\bgroup{\twtype}{}}
\def\@twtype#1{\mathchoice{{\textstyle\mathsf{W}_{(#1)}}}{\mathsf{W}_{(#1)}}{\mathsf{W}_{(#1)}}{\mathsf{W}_{(#1)}}}
\def\dwtype#1{\@dwtype{#1}\@ifnextchar\bgroup{\dwtype}{}}
\def\@dwtype#1{\mathsf{W}_{(#1)}\,}

\newcommand{\suppsym}{{\mathsf{sup}}}
\newcommand{\supp}{\ensuremath\suppsym\xspace}

\def\wtypeh#1{\@ifnextchar\bgroup%
  {\mathchoice{\@lwtypeh{#1}}{\@twtypeh{#1}}{\@twtypeh{#1}}{\@twtypeh{#1}}\wtypeh}%
  {\mathchoice{\@@lwtypeh{#1}}{\@twtypeh{#1}}{\@twtypeh{#1}}{\@twtypeh{#1}}}}
\def\lwtypeh#1{\@ifnextchar\bgroup{\@lwtypeh{#1}\lwtypeh}{\@@lwtypeh{#1}}}
\def\@lwtypeh#1{\mathchoice{{\textstyle\mathsf{W}^h}}{\mathsf{W}^h}{\mathsf{W}^h}{\mathsf{W}^h}({\textstyle #1})\;}
\def\@@lwtypeh#1{\mathchoice{{\textstyle\mathsf{W}^h}}{\mathsf{W}^h}{\mathsf{W}^h}{\mathsf{W}^h}({\textstyle #1}),\ }
\def\twtypeh#1{\@twtypeh{#1}\@ifnextchar\bgroup{\twtypeh}{}}
\def\@twtypeh#1{\mathchoice{{\textstyle\mathsf{W}^h_{(#1)}}}{\mathsf{W}^h_{(#1)}}{\mathsf{W}^h_{(#1)}}{\mathsf{W}^h_{(#1)}}}
\def\dwtypeh#1{\@dwtypeh{#1}\@ifnextchar\bgroup{\dwtypeh}{}}
\def\@dwtypeh#1{\mathsf{W}^h_{(#1)}\,}

\makeatother

% Other notations related to dependent sums
\let\setof\Set    % from package 'braket', write \setof{ x:A | P(x) }.
\newcommand{\pair}{\ensuremath{\mathsf{pair}}\xspace}
\newcommand{\tup}[2]{(#1,#2)}
\newcommand{\proj}[1]{\ensuremath{\mathsf{pr}_{#1}}\xspace}
\newcommand{\fst}{\ensuremath{\proj1}\xspace}
\newcommand{\snd}{\ensuremath{\proj2}\xspace}
\newcommand{\ac}{\ensuremath{\mathsf{ac}}\xspace} % not needed in symbol index
\newcommand{\un}{\ensuremath{\mathsf{upun}}\xspace} % not needed in symbol index, uniqueness principle for unit type

%%% recursor and induction
% \newcommand{\rec}[1]{\mathsf{rec}_{#1}}
\newcommand{\ind}[1]{\mathsf{ind}_{#1}}
\newcommand{\indid}[1]{\ind{=_{#1}}} % (Martin-Lof) path induction principle for identity types
\newcommand{\indidb}[1]{\ind{=_{#1}}'} % (Paulin-Mohring) based path induction principle for identity types

%%% the uniqueness principle for product types, formerly called surjective pairing and named \spr:
\newcommand{\uppt}{\ensuremath{\mathsf{uppt}}\xspace}

% Paths in pairs
\newcommand{\pairpath}{\ensuremath{\mathsf{pair}^{\mathord{=}}}\xspace}
\newcommand{\projpath}[1]{\ensuremath{\apfunc{\proj{#1}}}\xspace}

%%% For quotients %%%
%\newcommand{\pairr}[1]{{\langle #1\rangle}}
\newcommand{\pairr}[1]{{\mathopen{}(#1)\mathclose{}}}
\newcommand{\Pairr}[1]{{\mathopen{}\left(#1\right)\mathclose{}}}

\newcommand{\im}{\ensuremath{\mathsf{im}}} % the image

%%% 2D path operations
\newcommand{\leftwhisker}{\mathbin{{\ct}_{\mathsf{l}}}}  % was \ell
\newcommand{\rightwhisker}{\mathbin{{\ct}_{\mathsf{r}}}} % was r
\newcommand{\hct}{\star}

%%% modalities %%%
\newcommand{\modal}{\ensuremath{\ocircle}}
\let\reflect\modal
\newcommand{\modaltype}{\ensuremath{\type_\modal}}
\newcommand{\mreturn}{\ensuremath{\eta}}
\let\project\mreturn
\newcommand{\ext}{\mathsf{ext}}
%\newcommand{\mmap}[1]{\ensuremath{\bar{#1}}}
%\newcommand{\mjoin}{\ensuremath{\mreturn^{-1}}}
% Subuniverse
\renewcommand{\P}{\ensuremath{\type_{P}}\xspace}

%%% Localizations
% \newcommand{\islocal}[1]{\ensuremath{\mathsf{islocal}_{#1}}\xspace}
% \newcommand{\loc}[1]{\ensuremath{\mathcal{L}_{#1}}\xspace}

%%% Identity types %%%
\newcommand{\idsym}{{=}}
\newcommand{\id}[3][]{\ensuremath{#2 =_{#1} #3}\xspace}
\newcommand{\idtype}[3][]{\ensuremath{\mathsf{Id}_{#1}(#2,#3)}\xspace}
\newcommand{\idtypevar}[1]{\ensuremath{\mathsf{Id}_{#1}}\xspace}
% A propositional equality currently being defined
\newcommand{\defid}{\coloneqq}

%%% Dependent paths
\newcommand{\dpath}[4]{#3 =^{#1}_{#2} #4}

%%% singleton
% \newcommand{\sgl}{\ensuremath{\mathsf{sgl}}\xspace}
% \newcommand{\sctr}{\ensuremath{\mathsf{sctr}}\xspace}

%%% Reflexivity terms %%%
% \newcommand{\reflsym}{{\mathsf{refl}}}
\newcommand{\refl}[1]{\ensuremath{\mathsf{refl}_{#1}}\xspace}

%%% Path concatenation (used infix, in diagrammatic order) %%%
\newcommand{\ct}{%
  \mathchoice{\mathbin{\raisebox{0.5ex}{$\displaystyle\centerdot$}}}%
             {\mathbin{\raisebox{0.5ex}{$\centerdot$}}}%
             {\mathbin{\raisebox{0.25ex}{$\scriptstyle\,\centerdot\,$}}}%
             {\mathbin{\raisebox{0.1ex}{$\scriptscriptstyle\,\centerdot\,$}}}
}

%%% Path reversal %%%
\newcommand{\opp}[1]{\mathord{{#1}^{-1}}}
\let\rev\opp

%%% Transport (covariant) %%%
\newcommand{\trans}[2]{\ensuremath{{#1}_{*}\mathopen{}\left({#2}\right)\mathclose{}}\xspace}
\let\Trans\trans
\newcommand{\transf}[1]{\ensuremath{{#1}_{*}}\xspace} % Without argument
\newcommand{\transfib}[3]{\ensuremath{\mathsf{transport}^{#1}(#2,#3)\xspace}}
\newcommand{\Transfib}[3]{\ensuremath{\mathsf{transport}^{#1}\Big(#2,\, #3\Big)\xspace}}
\newcommand{\transfibf}[1]{\ensuremath{\mathsf{transport}^{#1}\xspace}}

%%% 2D transport
\newcommand{\transtwo}[2]{\ensuremath{\mathsf{transport}^2\mathopen{}\left({#1},{#2}\right)\mathclose{}}\xspace}

%%% Constant transport
\newcommand{\transconst}[3]{\ensuremath{\mathsf{transportconst}}^{#1}_{#2}(#3)\xspace}
\newcommand{\transconstf}{\ensuremath{\mathsf{transportconst}}\xspace}

%%% Map on paths %%%
\newcommand{\mapfunc}[1]{\ensuremath{\mathsf{ap}_{#1}}\xspace} % Without argument
\newcommand{\map}[2]{\ensuremath{{#1}\mathopen{}\left({#2}\right)\mathclose{}}\xspace}
\let\Ap\map
\newcommand{\mapdepfunc}[1]{\ensuremath{\mathsf{apd}_{#1}}\xspace} % Without argument
\newcommand{\mapdep}[2]{\ensuremath{\mapdepfunc{#1}\mathopen{}\left(#2\right)\mathclose{}}\xspace}
\let\apfunc\mapfunc
\let\ap\map
\let\apdfunc\mapdepfunc
\let\apd\mapdep

%%% 2D map on paths
\newcommand{\aptwofunc}[1]{\ensuremath{\mathsf{ap}^2_{#1}}\xspace}
\newcommand{\aptwo}[2]{\ensuremath{\aptwofunc{#1}\mathopen{}\left({#2}\right)\mathclose{}}\xspace}
\newcommand{\apdtwofunc}[1]{\ensuremath{\mathsf{apd}^2_{#1}}\xspace}
\newcommand{\apdtwo}[2]{\ensuremath{\apdtwofunc{#1}\mathopen{}\left(#2\right)\mathclose{}}\xspace}

%%% Identity functions %%%
\newcommand{\idfunc}[1][]{\ensuremath{\mathsf{id}_{#1}}\xspace}

%%% Homotopies (written infix) %%%
\newcommand{\htpy}{\sim}

%%% Other meanings of \sim
\newcommand{\bisim}{\sim}       % bisimulation
\newcommand{\eqr}{\sim}         % an equivalence relation

%%% Equivalence types %%%
\newcommand{\eqv}[2]{\ensuremath{#1 \simeq #2}\xspace}
\newcommand{\eqvspaced}[2]{\ensuremath{#1 \;\simeq\; #2}\xspace}
\newcommand{\eqvsym}{\simeq}    % infix symbol
\newcommand{\texteqv}[2]{\ensuremath{\mathsf{Equiv}(#1,#2)}\xspace}
\newcommand{\isequiv}{\ensuremath{\mathsf{isequiv}}}
\newcommand{\qinv}{\ensuremath{\mathsf{qinv}}}
\newcommand{\ishae}{\ensuremath{\mathsf{ishae}}}
\newcommand{\linv}{\ensuremath{\mathsf{linv}}}
\newcommand{\rinv}{\ensuremath{\mathsf{rinv}}}
\newcommand{\biinv}{\ensuremath{\mathsf{biinv}}}
\newcommand{\lcoh}[3]{\mathsf{lcoh}_{#1}(#2,#3)}
\newcommand{\rcoh}[3]{\mathsf{rcoh}_{#1}(#2,#3)}
\newcommand{\hfib}[2]{{\mathsf{fib}}_{#1}(#2)}

%%% Map on total spaces %%%
\newcommand{\total}[1]{\ensuremath{\mathsf{total}(#1)}}

%%% Universe types %%%
%\newcommand{\type}{\ensuremath{\mathsf{Type}}\xspace}
\newcommand{\UU}{\ensuremath{\mathcal{U}}\xspace}
\let\bbU\UU
\let\type\UU
% Universes of truncated types
\newcommand{\typele}[1]{\ensuremath{{#1}\text-\mathsf{Type}}\xspace}
\newcommand{\typeleU}[1]{\ensuremath{{#1}\text-\mathsf{Type}_\UU}\xspace}
\newcommand{\typelep}[1]{\ensuremath{{(#1)}\text-\mathsf{Type}}\xspace}
\newcommand{\typelepU}[1]{\ensuremath{{(#1)}\text-\mathsf{Type}_\UU}\xspace}
\let\ntype\typele
\let\ntypeU\typeleU
\let\ntypep\typelep
\let\ntypepU\typelepU
\newcommand{\set}{\ensuremath{\mathsf{Set}}\xspace}
\newcommand{\setU}{\ensuremath{\mathsf{Set}_\UU}\xspace}
\newcommand{\prop}{\ensuremath{\mathsf{Prop}}\xspace}
\newcommand{\propU}{\ensuremath{\mathsf{Prop}_\UU}\xspace}
%Pointed types
\newcommand{\pointed}[1]{\ensuremath{#1_\bullet}}

%%% Ordinals and cardinals
\newcommand{\card}{\ensuremath{\mathsf{Card}}\xspace}
\newcommand{\ord}{\ensuremath{\mathsf{Ord}}\xspace}
\newcommand{\ordsl}[2]{{#1}_{/#2}}

%%% Univalence
\newcommand{\ua}{\ensuremath{\mathsf{ua}}\xspace} % the inverse of idtoeqv
\newcommand{\idtoeqv}{\ensuremath{\mathsf{idtoeqv}}\xspace}
\newcommand{\univalence}{\ensuremath{\mathsf{univalence}}\xspace} % the full axiom

%%% Truncation levels
\newcommand{\iscontr}{\ensuremath{\mathsf{isContr}}}
\newcommand{\contr}{\ensuremath{\mathsf{contr}}} % The path to the center of contraction
\newcommand{\isset}{\ensuremath{\mathsf{isSet}}}
\newcommand{\isprop}{\ensuremath{\mathsf{isProp}}}
% h-propositions
% \newcommand{\anhprop}{a mere proposition\xspace}
% \newcommand{\hprops}{mere propositions\xspace}

%%% Homotopy fibers %%%
%\newcommand{\hfiber}[2]{\ensuremath{\mathsf{hFiber}(#1,#2)}\xspace}
\let\hfiber\hfib

%%% Bracket/squash/truncation types %%%
% \newcommand{\brck}[1]{\textsf{mere}(#1)}
% \newcommand{\Brck}[1]{\textsf{mere}\Big(#1\Big)}
% \newcommand{\trunc}[2]{\tau_{#1}(#2)}
% \newcommand{\Trunc}[2]{\tau_{#1}\Big(#2\Big)}
% \newcommand{\truncf}[1]{\tau_{#1}}
%\newcommand{\trunc}[2]{\Vert #2\Vert_{#1}}
\newcommand{\trunc}[2]{\mathopen{}\left\Vert #2\right\Vert_{#1}\mathclose{}}
\newcommand{\ttrunc}[2]{\bigl\Vert #2\bigr\Vert_{#1}}
\newcommand{\Trunc}[2]{\Bigl\Vert #2\Bigr\Vert_{#1}}
\newcommand{\truncf}[1]{\Vert \blank \Vert_{#1}}
\newcommand{\tproj}[3][]{\mathopen{}\left|#3\right|_{#2}^{#1}\mathclose{}}
\newcommand{\tprojf}[2][]{|\blank|_{#2}^{#1}}
\def\pizero{\trunc0}

\newcommand{\brck}[1]{\trunc{}{#1}}
\newcommand{\bbrck}[1]{\ttrunc{}{#1}}
\newcommand{\Brck}[1]{\Trunc{}{#1}}
\newcommand{\bproj}[1]{\tproj{}{#1}}
\newcommand{\bprojf}{\tprojf{}}

% Big parentheses
\newcommand{\Parens}[1]{\Bigl(#1\Bigr)}

% Projection and extension for truncations
\let\extendsmb\ext
\newcommand{\extend}[1]{\extendsmb(#1)}

%
%%% The empty type
\newcommand{\emptyt}{\ensuremath{\mathbf{0}}\xspace}

%%% The unit type
\newcommand{\unit}{\ensuremath{\mathbf{1}}\xspace}
\newcommand{\ttt}{\ensuremath{\star}\xspace}

%%% The two-element type
\newcommand{\bool}{\ensuremath{\mathbf{2}}\xspace}
\newcommand{\btrue}{{1_{\bool}}}
\newcommand{\bfalse}{{0_{\bool}}}

%%% Injections into binary sums and pushouts
\newcommand{\inlsym}{{\mathsf{inl}}}
\newcommand{\inrsym}{{\mathsf{inr}}}
\newcommand{\inl}{\ensuremath\inlsym\xspace}
\newcommand{\inr}{\ensuremath\inrsym\xspace}

%%% The segment of the interval
\newcommand{\seg}{\ensuremath{\mathsf{seg}}\xspace}

%%% Free groups
\newcommand{\freegroup}[1]{F(#1)}
\newcommand{\freegroupx}[1]{F'(#1)} % the "other" free group

%%% Glue of a pushout
\newcommand{\glue}{\mathsf{glue}}

%%% Colimits
\newcommand{\colim}{\mathsf{colim}}
\newcommand{\inc}{\mathsf{inc}}
\newcommand{\cmp}{\mathsf{cmp}}

%%% Circles and spheres
\newcommand{\Sn}{\mathbb{S}}
\newcommand{\base}{\ensuremath{\mathsf{base}}\xspace}
\newcommand{\lloop}{\ensuremath{\mathsf{loop}}\xspace}
\newcommand{\surf}{\ensuremath{\mathsf{surf}}\xspace}

%%% Suspension
\newcommand{\susp}{\Sigma}
\newcommand{\north}{\mathsf{N}}
\newcommand{\south}{\mathsf{S}}
\newcommand{\merid}{\mathsf{merid}}

%%% Blanks (shorthand for lambda abstractions)
\newcommand{\blank}{\mathord{\hspace{1pt}\text{--}\hspace{1pt}}}

%%% Nameless objects
\newcommand{\nameless}{\mathord{\hspace{1pt}\underline{\hspace{1ex}}\hspace{1pt}}}

%%% Some decorations
%\newcommand{\bbU}{\ensuremath{\mathbb{U}}\xspace}
% \newcommand{\bbB}{\ensuremath{\mathbb{B}}\xspace}
\newcommand{\bbP}{\ensuremath{\mathbb{P}}\xspace}

%%% Some categories
\newcommand{\uset}{\ensuremath{\mathcal{S}et}\xspace}
\newcommand{\ucat}{\ensuremath{{\mathcal{C}at}}\xspace}
\newcommand{\urel}{\ensuremath{\mathcal{R}el}\xspace}
\newcommand{\uhilb}{\ensuremath{\mathcal{H}ilb}\xspace}
\newcommand{\utype}{\ensuremath{\mathcal{T}\!ype}\xspace}

% Pullback corner
\newbox\pbbox
\setbox\pbbox=\hbox{\xy \POS(65,0)\ar@{-} (0,0) \ar@{-} (65,65)\endxy}
\def\pb{\save[]+<3.5mm,-3.5mm>*{\copy\pbbox} \restore}

% Macros for the categories chapter
\newcommand{\inv}[1]{{#1}^{-1}}
\newcommand{\idtoiso}{\ensuremath{\mathsf{idtoiso}}\xspace}
\newcommand{\isotoid}{\ensuremath{\mathsf{isotoid}}\xspace}
\newcommand{\op}{^{\mathrm{op}}}
\newcommand{\y}{\ensuremath{\mathbf{y}}\xspace}
\newcommand{\dgr}[1]{{#1}^{\dagger}}
\newcommand{\unitaryiso}{\mathrel{\cong^\dagger}}
\newcommand{\cteqv}[2]{\ensuremath{#1 \simeq #2}\xspace}
\newcommand{\cteqvsym}{\simeq}     % Symbol for equivalence of categories

%%% Natural numbers
\newcommand{\N}{\ensuremath{\mathbb{N}}\xspace}
\let\nat\N
\newcommand{\natp}{\ensuremath{\nat'}\xspace} % alternative nat in induction chapter

\newcommand{\zerop}{\ensuremath{0'}\xspace}   % alternative zero in induction chapter
\newcommand{\suc}{\mathsf{succ}}
\newcommand{\sucp}{\ensuremath{\suc'}\xspace} % alternative suc in induction chapter
\newcommand{\add}{\mathsf{add}}
\newcommand{\ack}{\mathsf{ack}}
\newcommand{\ite}{\mathsf{iter}}
\newcommand{\assoc}{\mathsf{assoc}}
\newcommand{\dbl}{\ensuremath{\mathsf{double}}}
\newcommand{\dblp}{\ensuremath{\dbl'}\xspace} % alternative double in induction chapter

%%% Lists
\newcommand{\lst}[1]{\mathsf{List}(#1)}
\newcommand{\nil}{\mathsf{nil}}
\newcommand{\cons}{\mathsf{cons}}
\newcommand{\lost}[1]{\mathsf{Lost}(#1)}

%%% Vectors of given length, used in induction chapter
\newcommand{\vect}[2]{\ensuremath{\mathsf{Vec}_{#1}(#2)}\xspace}

%%% Integers
\newcommand{\Z}{\ensuremath{\mathbb{Z}}\xspace}
\newcommand{\Zsuc}{\mathsf{succ}}
\newcommand{\Zpred}{\mathsf{pred}}

%%% Rationals
\newcommand{\Q}{\ensuremath{\mathbb{Q}}\xspace}

%%% Function extensionality
\newcommand{\funext}{\mathsf{funext}}
\newcommand{\happly}{\mathsf{happly}}

%%% A naturality lemma
\newcommand{\com}[3]{\mathsf{swap}_{#1,#2}(#3)}

%%% Code/encode/decode
\newcommand{\code}{\ensuremath{\mathsf{code}}\xspace}
\newcommand{\encode}{\ensuremath{\mathsf{encode}}\xspace}
\newcommand{\decode}{\ensuremath{\mathsf{decode}}\xspace}

% Function definition with domain and codomain
\newcommand{\function}[4]{\left\{\begin{array}{rcl}#1 &
      \longrightarrow & #2 \\ #3 & \longmapsto & #4 \end{array}\right.}

%%% Cones and cocones
\newcommand{\cone}[2]{\mathsf{cone}_{#1}(#2)}
\newcommand{\cocone}[2]{\mathsf{cocone}_{#1}(#2)}
% Apply a function to a cocone
\newcommand{\composecocone}[2]{#1\circ#2}
\newcommand{\composecone}[2]{#2\circ#1}
%%% Diagrams
\newcommand{\Ddiag}{\mathscr{D}}

%%% (pointed) mapping spaces
\newcommand{\Map}{\mathsf{Map}}

%%% The interval
\newcommand{\interval}{\ensuremath{I}\xspace}
\newcommand{\izero}{\ensuremath{0_{\interval}}\xspace}
\newcommand{\ione}{\ensuremath{1_{\interval}}\xspace}

%%% Arrows
\newcommand{\epi}{\ensuremath{\twoheadrightarrow}}
\newcommand{\mono}{\ensuremath{\rightarrowtail}}

%%% Sets
\newcommand{\bin}{\ensuremath{\mathrel{\widetilde{\in}}}}

%%% Semigroup structure
\newcommand{\semigroupstrsym}{\ensuremath{\mathsf{SemigroupStr}}}
\newcommand{\semigroupstr}[1]{\ensuremath{\mathsf{SemigroupStr}}(#1)}
\newcommand{\semigroup}[0]{\ensuremath{\mathsf{Semigroup}}}

%%% Macros for the formal type theory
\newcommand{\emptyctx}{\ensuremath{\cdot}}
\newcommand{\production}{\vcentcolon\vcentcolon=}
\newcommand{\conv}{\downarrow}
\newcommand{\ctx}{\ensuremath{\mathsf{ctx}}}
\newcommand{\wfctx}[1]{#1\ \ctx}
\newcommand{\oftp}[3]{#1 \vdash #2 : #3}
\newcommand{\jdeqtp}[4]{#1 \vdash #2 \jdeq #3 : #4}
\newcommand{\judg}[2]{#1 \vdash #2}
\newcommand{\tmtp}[2]{#1 \mathord{:} #2}

% rule names
\newcommand{\form}{\textsc{form}}
\newcommand{\intro}{\textsc{intro}}
\newcommand{\elim}{\textsc{elim}}
\newcommand{\uniq}{\textsc{uniq}}
\newcommand{\Weak}{\mathsf{Wkg}}
\newcommand{\Vble}{\mathsf{Vble}}
\newcommand{\Exch}{\mathsf{Exch}}
\newcommand{\Subst}{\mathsf{Subst}}

%%% Macros for HITs
\newcommand{\cc}{\mathsf{c}}
\newcommand{\pp}{\mathsf{p}}
\newcommand{\cct}{\widetilde{\mathsf{c}}}
\newcommand{\ppt}{\widetilde{\mathsf{p}}}
\newcommand{\Wtil}{\ensuremath{\widetilde{W}}\xspace}

%%% Macros for n-types
\newcommand{\istype}[1]{\mathsf{is}\mbox{-}{#1}\mbox{-}\mathsf{type}}
\newcommand{\nplusone}{\ensuremath{(n+1)}}
\newcommand{\nminusone}{\ensuremath{(n-1)}}
\newcommand{\fact}{\mathsf{fact}}

%%% Macros for homotopy
\newcommand{\kbar}{\overline{k}} % Used in van Kampen's theorem

%%% Macros for induction
\newcommand{\natw}{\ensuremath{\mathbf{N^w}}\xspace}
\newcommand{\zerow}{\ensuremath{0^\mathbf{w}}\xspace}
\newcommand{\sucw}{\ensuremath{\mathsf{succ}^{\mathbf{w}}}\xspace}
\newcommand{\nalg}{\nat\mathsf{Alg}}
\newcommand{\nhom}{\nat\mathsf{Hom}}
\newcommand{\ishinitw}{\mathsf{isHinit}_{\mathsf{W}}}
\newcommand{\ishinitn}{\mathsf{isHinit}_\nat}
\newcommand{\w}{\mathsf{W}}
\newcommand{\walg}{\w\mathsf{Alg}}
\newcommand{\whom}{\w\mathsf{Hom}}

%%% Macros for real numbers
\newcommand{\RC}{\ensuremath{\mathbb{R}_\mathsf{c}}\xspace} % Cauchy
\newcommand{\RD}{\ensuremath{\mathbb{R}_\mathsf{d}}\xspace} % Dedekind
\newcommand{\R}{\ensuremath{\mathbb{R}}\xspace}           % Either
\newcommand{\barRD}{\ensuremath{\bar{\mathbb{R}}_\mathsf{d}}\xspace} % Dedekind completion of Dedekind

\newcommand{\close}[1]{\sim_{#1}} % Relation of closeness
\newcommand{\closesym}{\mathord\sim}
\newcommand{\rclim}{\mathsf{lim}} % HIT constructor for Cauchy reals
\newcommand{\rcrat}{\mathsf{rat}} % Embedding of rationals into Cauchy reals
\newcommand{\rceq}{\mathsf{eq}_{\RC}} % HIT path constructor
\newcommand{\CAP}{\mathcal{C}}    % The type of Cauchy approximations
\newcommand{\Qp}{\Q_{+}}
\newcommand{\apart}{\mathrel{\#}}  % apartness
\newcommand{\dcut}{\mathsf{isCut}}  % Dedekind cut
\newcommand{\cover}{\triangleleft} % inductive cover
\newcommand{\intfam}[3]{(#2, \lam{#1} #3)} % family of rational intervals

% Macros for the Cauchy reals construction
\newcommand{\bsim}{\frown}
\newcommand{\bbsim}{\smile}

\newcommand{\hapx}{\diamondsuit\approx}
\newcommand{\hapname}{\diamondsuit}
\newcommand{\hapxb}{\heartsuit\approx}
\newcommand{\hapbname}{\heartsuit}
\newcommand{\tap}[1]{\bullet\approx_{#1}\triangle}
\newcommand{\tapname}{\triangle}
\newcommand{\tapb}[1]{\bullet\approx_{#1}\square}
\newcommand{\tapbname}{\square}

%%% Macros for surreals
\newcommand{\NO}{\ensuremath{\mathsf{No}}\xspace}
\newcommand{\surr}[2]{\{\,#1\,\big|\,#2\,\}}
\newcommand{\LL}{\mathcal{L}}
\newcommand{\RR}{\mathcal{R}}
\newcommand{\noeq}{\mathsf{eq}_{\NO}} % HIT path constructor

\newcommand{\ble}{\trianglelefteqslant}
\newcommand{\blt}{\vartriangleleft}
\newcommand{\bble}{\sqsubseteq}
\newcommand{\bblt}{\sqsubset}

\newcommand{\hle}{\diamondsuit\preceq}
\newcommand{\hlt}{\diamondsuit\prec}
\newcommand{\hlname}{\diamondsuit}
\newcommand{\hleb}{\heartsuit\preceq}
\newcommand{\hltb}{\heartsuit\prec}
\newcommand{\hlbname}{\heartsuit}
\newcommand{\tle}{\triangle\preceq}
\newcommand{\tlt}{\triangle\prec}
\newcommand{\tlname}{\triangle}
\newcommand{\tleb}{\square\preceq}
\newcommand{\tltb}{\square\prec}
\newcommand{\tlbname}{\square}

%%% Macros for set theory
\newcommand{\vset}{\mathsf{set}}  % point constructor for cummulative hierarchy V
\def\cd{\tproj0}
\newcommand{\inj}{\ensuremath{\mathsf{inj}}} % type of injections
\newcommand{\acc}{\ensuremath{\mathsf{acc}}} % accessibility

\newcommand{\atMostOne}{\mathsf{atMostOne}}

\newcommand{\power}[1]{\mathcal{P}(#1)} % power set
\newcommand{\powerp}[1]{\mathcal{P}_+(#1)} % inhabited power set

\setcounter{secnumdepth}{5}

% Display format for figures
\crefname{figure}{Figure}{Figures}

%%%% EQUATION NUMBERING %%%%

% The following hack uses the single theorem counter to number
% equations as well, so that we don't have both Theorem 1.1 and
% equation (1.1).
% \let\c@equation\c@thm
\numberwithin{equation}{section}

%%%% ENUMERATE NUMBERING %%%%

% Number the first level of enumerates as (i), (ii), ...
\renewcommand{\theenumi}{(\roman{enumi})}
\renewcommand{\labelenumi}{\theenumi}

%%%% MARGINS %%%%

% This is a matter of personal preference, but I think the left
% margins on enumerates and itemizes are too wide.
% \setitemize[1]{leftmargin=2em}
% \setenumerate[1]{leftmargin=*}

% Likewise that they are too spaced out.
% \setitemize[1]{itemsep=-0.2em}
% \setenumerate[1]{itemsep=-0.2em}

%%% Notes %%%
\def\noteson{%
\gdef\note##1{\mbox{}\marginpar{\color{blue}\textasteriskcentered\ ##1}}}
\gdef\notesoff{\gdef\note##1{\null}}
\noteson

\newcommand{\Coq}{\textsc{Coq}\xspace}
\newcommand{\Agda}{\textsc{Agda}\xspace}
\newcommand{\NuPRL}{\textsc{NuPRL}\xspace}

%%%% CITATIONS %%%%

% \let \cite \citep

%%%% INDEX %%%%

\newcommand{\footstyle}[1]{{\hyperpage{#1}}n} % If you index something that is in a footnote
\newcommand{\defstyle}[1]{\textbf{\hyperpage{#1}}}  % Style for pageref to a definition

\newcommand{\indexdef}[1]{\index{#1|defstyle}}   % Index a definition
\newcommand{\indexfoot}[1]{\index{#1|footstyle}} % Index a term in a footnote
\newcommand{\indexsee}[2]{\index{#1|see{#2}}}    % Index "see also"

%%%% Standard phrasing or spelling of common phrases %%%%

\newcommand{\ZF}{Zermelo--Fraenkel}
\newcommand{\CZF}{Constructive \ZF{} Set Theory}

\newcommand{\LEM}[1]{\ensuremath{\mathsf{LEM}_{#1}}\xspace}
\newcommand{\choice}[1]{\ensuremath{\mathsf{AC}_{#1}}\xspace}

%%%% MISC %%%%

\newcommand{\mentalpause}{\medskip} % Use for "mental" pause, instead of \smallskip or \medskip

%% Use \symlabel instead of \label to mark a pageref that you need in the index of symbols
\newcounter{symindex}
\newcommand{\symlabel}[1]{\refstepcounter{symindex}\label{#1}}

% Local Variables:
% mode: latex
% TeX-master: "hott-online"
% End:

\title{Heterogeneous substitution systems revisited}

\author{Benedikt Ahrens}
\author{Ralph Matthes}

\thanks{The work of Benedikt Ahrens was partially supported by the CIMI (Centre International de Mathématiques et d'Informatique)
Excellence program ANR-11-LABX-0040-CIMI within the program ANR-11-IDEX-0002-02 during a postdoctoral fellowship.
\\
This material is based upon work supported by the National Science Foundation under agreement Nos. DMS-1128155 and CMU 1150129-338510.
Any opinions, findings and conclusions or recommendations expressed in this material are those of the author(s) and do not necessarily reflect the views of the National Science Foundation.}

\begin{abstract}
 Matthes and Uustalu (TCS 327(1--2):155--174, 2004) presented a
 categorical description of substitution systems capable of capturing
 syntax involving binding which is independent of whether the syntax
 is made up from least or greatest fixed points.
 We extend this work
 in two directions: we continue the analysis by creating more
 categorical structure, in particular by organizing substitution
 systems into a category and studying its properties, and we develop
 the proofs of the results of the cited paper and our new ones in
 \UniMath, a recent library of univalent mathematics formalized in the \Coq theorem
 prover.
\end{abstract}

\maketitle

\tableofcontents

\section{Introduction}
Given a first-order signature over some supply of variables,
substitution is nearly a homomorphism: the substitution function
commutes with all term-forming operations (however, at leaf positions,
variables may get replaced by terms). But substitution also gives rise
to a monad structure. For this, it is useful to see the variable
supply of the terms as a parameter: writing $TA$ for the set of terms
over variable supply $A$ (those variables that may occur free in the
terms), parallel substitution associates with each substitution rule
$f$, which is a function from $A$ to $TB$, a substitution function
$[f]:TA\to TB$, and for a given term $t:TA$, the term $t[f]:TB$
(notice the post-fix notation for function $[f]$) is the result of the
parallel substitution that replaces each occurrence of a variable
$x:A$ in $t$ by $fx:TB$. In fact, the function $T$, the function that
injects variables into terms, and the operation of parallel
substitution together form a monad in the format of a Kleisli triple
over the category of sets and functions. Notice that the types serve
as a means of tracking the (names of) variables that may occur free in
a term, the object syntax itself is untyped. The parameter $A$ plays a
more prominent role as soon as variable binding is allowed in the
object syntax: for pure $\lambda$-calculus, bound and free variable
occurrences have to be distinguished, and even the constructors of the
object language relate terms with different variable supply, in
particular $\lambda$-abstraction assumes an argument term where the
newly bound variable is added to the variable supply (this will be
seen with more details in
Section~\ref{sec:explicit_flattening}.). Although parallel
substitution $t[f]$ has to be defined with extra care to avoid capture
of free variables of some $fx$ by binders in $t$, it is still (modulo
$\alpha$-equivalence) nearly a homomorphism, and it still yields a
monad \cite{BellegardeHook}. However, the monad laws by themselves do
not express the (nearly) ``homomorphic nature'' of substitution.

In previous work, Matthes and Uustalu
\cite{DBLP:journals/tcs/MatthesU04} define a notion of
\enquote{heterogeneous substitution system}, the purpose of which is
to axiomatize substitution and its desired properties.  Such a
substitution system is given by an algebra of a signature functor,
equipped with an operation---which is to be thought of as
substitution---that is compatible with the algebra structure map in a
suitable sense.  The term \enquote{heterogeneous} refers to the fact
that the underlying notion of signature encompasses variable binding
constructions and also explicit substitution a.\,k.\,a.\ flattening.
More precisely, the signature is based on a rank-2 functor $H$ (an
endofunctor on a category of endofunctors) for the respective
domain-specific signature, to which a monadic unit is explicitly
added.  The latter corresponds to the inclusion of variables into the
elements that are considered as terms (in a quite general sense) over
their variable supply.
The name \enquote{rank-2 functor} stems from
the rank of the type operator that transforms type transformations
into type transformations---hence has kind
$(\Set\to\Set)\to(\Set\to\Set)$---which may be seen as backbone of $H$
in case the base category is $\Set$.
In this rank-2 setting, the carrier of the algebra is an
endofunctor, and since a monadic unit is already present, a natural
question is if one obtains a monad.  In that paper, it is then shown
that for any heterogeneous substitution system this is indeed the
case; multiplication of the monad is derived from the
\enquote{substitution} operation which is parameterized by a morphism
$f$ of pointed endofunctors and consists in asking for a unique
solution that makes a certain diagram commute.  Monad multiplication
and one of the monad laws is obtained from the existence of a solution
in the case that $f$ is the identity, while the other monad laws are
derived from uniqueness for two other choices of $f$.

Furthermore, it is shown there that \enquote{substitution is for free}
for both initial algebras as well as---maybe more surprisingly---for
(the inverse of) final coalgebras: if the initial algebra,
resp.\ terminal coalgebra, of a given signature functor exists, then
it, resp.\ its inverse, can be augmented to a substitution system (for
the former case, and in order to easily use generalized iteration
\cite{DBLP:journals/fac/BirdP99}, it is assumed that the functor
$-\cdot Z$ has a right adjoint for every endofunctor $Z$).  Indeed, it
was one of the design goals of the axiomatic framework of
heterogeneous substitution systems to be applicable to
\emph{non-wellfounded} syntax as well as to wellfounded syntax,
whereas related work (e.g., \cite{fpt,ahrens_ext_init}) frequently
only applies to wellfounded syntax.

Examples of substitution systems are thus given by the lambda calculus, with and without explicit flattening, but also
by languages involving typing and \emph{infinite} terms.

The goal of the present work is twofold:

Firstly, we extend the work by
Matthes and Uustalu \cite{DBLP:journals/tcs/MatthesU04}; in
particular, we introduce a natural notion of \emph{morphisms} of
heterogeneous substitution systems, thus arranging them into a
category.  We then show that the construction of a monad from a
heterogeneous substitution system from
\cite{DBLP:journals/tcs/MatthesU04} extends functorially to morphisms.
Moreover, we prove that the substitution system obtained in
\cite{DBLP:journals/tcs/MatthesU04} by equipping the initial algebra
with a substitution operation, is initial in the corresponding
category of substitution systems. This makes use of a general fusion
law for generalized iteration
\cite{DBLP:journals/fac/BirdP99}.
Moreover, we prove that the property of being initial in the category of algebras lifts to initiality of the
associated substitution system in the corresponding category.
As an example of the usefulness of our results, we express
the resolution of explicit flattening of the lambda calculus as a(n
initial) morphism of substitution systems.

A second part of our work is the formalization of
some of our results in univalent
foundations, more specifically, building upon the \UniMath library
\cite{unimath}. This basis of our formalization is suitable in that it
provides extensionality (functional and
propositional) in a natural way and hereby avoids the use of setoids that would otherwise
be inevitable; indeed, since our results are not about categories \emph{in
  abstracto} but use general categorical concepts in more concrete
instances such as the endofunctor category over a given category or
its extension by a \enquote{point}, we need extensionality axioms for the instantiation.
We
profit from the existing category theory library \cite{rezk_completion} in \UniMath.

\subsection*{Related work}

Related work is extensively discussed in Matthes and Uustalu's article \cite{DBLP:journals/tcs/MatthesU04}.

In the meantime, monads and modules over monads, have been used by 
Hirscho\-witz and Maggesi \cite{DBLP:conf/wollic/HirschowitzM07,DBLP:journals/iandc/HirschowitzM10}
to define models of syntax, and to give a categorical characterization thereof.

The notion of signature introduced in \cite{DBLP:journals/tcs/MatthesU04} and formalized in the 
present work is similar to that employed in Hirschowitz and Maggesi's 
most recent work \cite{DBLP:journals/corr/abs-1202-3499}.
One difference is that we do not, in the present work, insist on our signature functor to be $\omega$-cocontinuous,
since we do not worry about the existence of initial algebras, but assume them to exist.
In our follow-up work with Mörtberg \cite{ahrens_mortberg} on the construction
of initial algebras in sets, however, this condition will be of the essence.

Monads and modules over monads can also be used as the basis, the \enquote{raw syntax}, 
from which dependently typed theories are carved out, as exhibited by Voevodsky \cite{csystem_monad}.
Our formalization provides one of the many steps involved, providing a monad structure on an initial algebra
of a rank-2 endofunctor.

\subsection*{Synopsis}

In Section \ref{sec:univalent_mathematics} we first give a brief overview of
the univalent foundations we work in.
Afterwards, we review the definition of categories in those foundations,
and finally, we show how the foundations are realized in the proof assistant \Coq.

In Section \ref{sec:preliminaries} we define a few basic concepts and introduce notation.

In Section \ref{sec:mendler} we present \enquote{Generalized Iteration in Mendler-style}, and
a fusion law satisfied by this form of iteration.
The presented results will be used in Section \ref{sec:initiality}.

In Section \ref{sec:hss} we review the notion of heterogeneous substitution system.
Afterwards, we define a category of substitution systems and prove a few
properties about that category.

In Section \ref{sec:monads} we state one of the main results of \cite{DBLP:journals/tcs/MatthesU04},
the construction of a monad from a substitution system.
We then prove that the map thus constructed extends to morphisms and yields a faithful functor.

In Section \ref{sec:initiality} we state another of the important results of
\cite{DBLP:journals/tcs/MatthesU04}: the construction of a substitution system from
an initial algebra via Generalized Iteration in Mendler-style as presented in Section \ref{sec:mendler}.
We show that the obtained substitution system is again initial, using
the fusion law stated in \ref{sec:mendler}.

In Section \ref{sec:explicit_flattening}, we construct a particular morphism of
substitution systems, the underlying map of which \enquote{computes away}
explicit substitution of lambda calculus.

Most of the results presented in this article,
both by Matthes and Uustalu \cite{DBLP:journals/tcs/MatthesU04} and our new results, have been formalized, based on the
\UniMath library \cite{unimath}.
More precisely, all results except for
Theorem \ref{thm:hss_sat} and Lemmas \ref{lem:sub_sat} and \ref{lem:hss_replete} are proved in our formalization;
Section~\ref{sec:formalization} provides some technical details about our library.

\section{Univalent mathematics}\label{sec:univalent_mathematics}
The original article \cite{DBLP:journals/tcs/MatthesU04} is written without referring to a specific foundation of mathematics.
Indeed, the authors use purely categorical methods to derive their results.

Our analysis and continuation of that article takes place in a \emph{type-theoretic} foundation, more specifically,
in a type theory augmented by Voevodsky's \emph{Univalence Axiom}.
The resulting theory, to which we refer by the name \enquote{HoTT} in this article, is extensively described elsewhere \cite{hottbook};
we do not attempt to give a comprehensive introduction to HoTT or to the Univalence Axiom in this article.
Instead, here we focus on some of the salient features of HoTT and indicate why they are important to us.

\subsection{About univalent foundations}\label{sec:uf}

  By \enquote{univalent foundations} we refer to an intensional Martin-Löf type theory (IMLTT)
  augmented by Voevodsky's univalence axiom.
  In the following, we give a brief overview of the type constructors available in univalent foundations,
  and a technical statement of the univalence axiom.

   Technically, the univalent foundation we work in is a dependent type theory.
   For a dependent type $B$ over $A$, written $x : A \entails B(x)$, there is the \fat{dependent sum} $\sm{x:A}B(x)$,
   elements of which are dependent pairs $(a,p)$ where $a : A$ and $p : B(a)$.
   The type $\prd{x:A}B(x)$ is the type of dependent functions from $A$ to $B$, that is, a function $f : \prd{x:A}B(x)$
   maps $a:A$ into the type $B(a)$.

   Special, non-dependent, cases of the aforementioned constructors are the cartesian product $A \times B$ and
   the function type $A \to B$.

   For any type $A$ and $a,b : A$ elements of $A$, there is the Martin-Löf identity type $a =_{A} b$ of \enquote{(propositional) equalities} between $a$ and $b$.
   We often omit the subscript $A$ and hence simply write $a = b$.

   One of the most salient features of univalent foundations is the univalence axiom.
   Intuitively, it says that any construction expressible in intensional type theory is
   invariant under equivalence of types.
   What is equivalence of types? The reader can think of it as isomorphism of types:
   two types $A$ and $B$ are isomorphic if there are maps $f : A \to B$ and $g : B \to A$ such
   that both composites $f \circ g$ and $g \circ f$ are pointwise equal (with respect to propositional equality)
   to the identity function. While the definition of equivalence is more refined than that of an isomorphism of types,
   it is the case that any isomorphism gives rise to an equivalence, that is, two types are isomorphic
   if and only if they are equivalent.
   The univalence axiom is stated for a particular given universe. Define, for a fixed universe $\type$,
   the canonical map
   \[
     \idtoeqv : \prd{A,B:\type} A = B \to A \simeq B
   \]
   from identities to equivalences between $A$ and $B$; it is defined by identity elimination,
   mapping the reflexivity term $\refl{A} : A = A$ to the identity equivalence on $A$.
   The universe $\type$ is called \fat{univalent} if for any $A$ and $B$ in $\type$, the map
   $\idtoeqv_{A,B}$ is an equivalence.

  The univalence axiom has a number of desirable consequences---provable \fat{inside} the theory---which can
  be subsumed by the term \enquote{equivalence principle}:
  The equivalence principle says, intuitively, that reasoning about mathematical objects should be
  invariant under an appropriate notion of \enquote{equivalence}
  for those objects.
  In the  foundation we work in, the equivalence principle can be proved
  for function types (function extensionality),
  for mathematical structures such as groups and rings \cite{Coquand20131105},
  and for categories \cite{rezk_completion}.

  A second salient feature of univalent foundations is its \fat{internal} notion of \fat{propositions} and \fat{sets}.
  A type $A$ is called a proposition if it satisfies the (propositional) \enquote{proof irrelevance} principle, that is,
  if one can construct a term of type
  \[
     \isprop(A) := \prd{x,y:A} x = y \enspace .
  \]
  Furthermore, a type $A$ is called a set if all of its identity types are propositions, that is, if one can
  construct a term of type
  \[
     \isset(A) := \prd{x,y:A} \isprop(x = y) \enspace .
  \]
  These two definitions are actually special cases of a more general definition of \fat{homotopy levels} of types.
  However, the general definition will not be of use in this article, and can be consulted in \cite{hottbook}.
  We call \fat{proposition} any type that is a proposition in this sense, that is, any element of $\prop:=\sm{X:\type} \isprop(X)$,
  and similarly for \fat{sets}.

\subsection{Category theory in univalent foundations}\label{sec:ct_in_uf}

  Some category theory in univalent foundations has been developed in \cite{rezk_completion}.
  A category $\C$ is given by
  \begin{itemize}
   \item a type $\C_0$ of objects;
   \item for any $a,b : \C_0$, a type $\C(a,b)$ of morphisms from $a$ to $b$;
   \item for any $a : \C_0$, an identity morphism $\identity(a) : \C(a,a)$;
   \item for any $a,b,c : \C_0$, a composition function $\C(a,b) \to \C(b,c) \to \C(a,c)$,
         written $f \mapsto g \mapsto g \circ f$;
   \item for any $a,b:\C_0$ and $f:\C(a,b)$, we have $f \circ \identity(a) = f$ and $\identity(b) \circ f = f$;
  \item for any $a,b,c,d:A$ and $f:\C(a,b)$, $g:\C(b,c)$, $h:\C(c,d)$, we have ${h\circ (g\circ f)} = {(h\circ g)\circ f}$.
  \end{itemize}

  There is an important difference between categories as usually formalized in
  intensional type theory and categories as considered in \cite{rezk_completion}:
  in intensional type theory, categories are usually defined to come with a custom equivalence relation on the types of morphisms, which is to be read as equality relation on morphisms, %%% Ralph: why equivalence and equality? % Benedikt: nun weniger verwirrend?
  specified for each category individually (see, e.g., \cite{ConCat}, \cite[Chapter 6]{ahrens_phd}).
  This notion of category is sometimes referred to by \enquote{E-categories} \cite{ecats_palmgren}.

  In the formalization of \cite{rezk_completion}, which takes place in univalent foundations, however,
  the authors consider morphisms of a category modulo equality as given by the identity type.
  That this is feasible is due to the extensional features that the univalence axiom adds to type theory,
  in particular, function extensionality.

  The notion of category is actually more refined in \cite{rezk_completion};
  two conditions must be satisfied by a category:
  \begin{enumerate}
    \item \label{item:homsets} Its hom-types $\C(a,b)$ need to be \fat{sets}.
     This is necessary for the axioms---which talk about equality of arrows---to be
  propositions.
    \item \label{item:univalent} Secondly, in a category, the type of (propositional) equalities (as given by
  the Martin-Löf identity type) between any two objects must be equivalent to the type of isomorphisms between those objects.
  More precisely, to any category one defines a family of maps
  \[
    \idtoiso: \prd{a,b:\C_0} (a = b) \to \iso(a,b) \enspace.
  \]
   This family of maps is defined by identity elimination, mapping $\refl{a} : a = a$ to the identity isomorphism on $a$.
   A category $\C$ is called \fat{univalent}, if for any $a,b : \C_0$, the map $\idtoiso_{a,b}$ is an equivalence.

   \end{enumerate}

  The univalence condition for categories \ref{item:univalent} states, intuitively, that isomorphic objects in such a category cannot be distinguished.
  The equivalence principle for
  univalent categories, proved in \cite{rezk_completion},
  then says that any two equivalent such categories cannot be distinguished either, that is,
  the postulated invariance on objects (univalence) lifts to the categories themselves.
  One of the results proved below shows that our main category of interest is univalent %%% Ralph: wäre "proved below" überhaupt zutreffend für die anderen Resultate, deren Beweise wir im Paper nicht ausführen? Hier natürlich kein Problem.
  if one starts 
  with a univalent category (Theorem~\ref{thm:hss_sat}).

   An important remark about naming: in \cite{rezk_completion},
   the term \enquote{precategory} is employed for categories that satisfy condition \ref{item:homsets},
   and the term \enquote{category} is reserved for categories that, additionally, satisfy the univalence condition \ref{item:univalent}.
   That is, the authors of \cite{rezk_completion} use the terms
   \enquote{precategory} and \enquote{category} for what we call \enquote{category} and \enquote{univalent category}
   in the present article, respectively.
   The rationale behind this naming convention in \cite{rezk_completion} is that the notion of categories satisfying condition \ref{item:univalent}
   should be considered to be the right notion of category,
   for those categories satisfy the equivalence principle. Furthermore, many important examples of categories do satisfy this condition,
   and the condition is closed under a lot of constructions of new categories from old categories:
   \begin{itemize}
    \item the category of sets and functions between them is univalent;
    \item categories of algebraic structures (groups, rings,\ldots) are univalent;
    \item the functor category $[\C,\D]$ is univalent if $\D$ is;
    \item a full subcategory of a univalent category is again univalent.
   \end{itemize}
   More constructions of categories that preserve univalence are given below.

   For the purposes of the present article, the univalence condition on categories is not essential.
   Indeed, no other result depends on Theorem~\ref{thm:hss_sat}
   We thus choose to de-emphasize the importance of the univalence condition for categories
   by deviating from the naming of \cite{rezk_completion},
   and instead to make it explicit when considering categories that satisfy univalence.

\subsection{About \UniMath}\label{sec:unimath}

  The goal of the \UniMath library is to provide a library of computer-checked mathematics
  formalized in (a computer implementation of) the univalent foundations.
  At this time, there is no computer theorem prover that implements exactly the univalent foundations
  as described in Section \ref{sec:uf}.
  As an approximation for such a tool, we use the \Coq proof assistant \cite{Coq:manual} as a base of \UniMath. %%% Ralph: wolltest Du hier keine Referenz auf Coq angeben? Es muß ja nicht Coq'Art sein, die Webseite wäre wohl genug. Oder erscheint Dir das als überflüssig?  Benedikt: done
  However, in order to simulate working in the theory described in Section \ref{sec:uf}, we
  do not use the full language \Coq provides, but restrict ourselves to the language constructors
  described above.
  In particular, there is no use of inductive types besides that of the natural numbers,
  and of the identity type and the type of dependent pairs, both of which are not primitives in \Coq,
  but instead implemented via the general \texttt{Inductive} vernacular.
  Furthermore, record types are not used in \UniMath; bundling of structures is instead implemented via
  (iterated) Sigma types.

  The proof assistant \Coq has recently gained a form of
  universe polymorphism \cite{DBLP:conf/itp/SozeauT14}.
  Unfortunately, this universe management is not powerful enough for our purposes. In particular, it does not
  implement a form of \emph{resizing} rule that is needed for some impredicative encodings of
  constructions---propositional truncation in particular, as described by Voevodsky \cite[Section 4]{voevodsky_experimental}.
  It was thus Voevodsky's choice to use a modified version of \Coq where the checking of
  universe levels was deactivated, and the system hence inconsistent.
  In the meantime, \Coq has been improved to allow the disabling of universe checking via a flag \texttt{-type-in-type}
  passed to the program, instead of modifying its source code.
  The \UniMath library hence is based on an unmodified version of \Coq,
  but is still working in an inconsistent system for now,
  while waiting for a new, more suitable universe management to be implemented.

  Another difference to standard \Coq is our use of the \texttt{-indices-matter} flag.
  This flag ensures that the identity type associated to a type $A$, lives in the same universe
  as the type $A$ itself. By default, without that flag, \Coq would put the identity type
  into the universe \texttt{Prop}
  (not to be confounded with the \fat{homotopy level} of propositions explained in Section \ref{sec:uf}).

  The experimental \enquote{Higher Inductive Types} (HITs),
  described e.g. in the HoTT book \cite{hottbook}, are not used in \UniMath.

  The univalence axiom is implemented in \UniMath via the \texttt{Axiom} vernacular of \Coq.
  This leads to potentially non-normalizing terms, when using the axiom or any of
  its consequences---such as function extensionality.
  We do not experience any problems related to non-normalization,
  since we only use the univalence axiom (indirectly by using function extensionality)
  for proving propositions, not for specifying operations.

\section{Preliminaries}\label{sec:preliminaries}

Categories, functors and natural transformations are defined in \cite{rezk_completion}.
Some more concepts and notation are defined in the following:

For functors $F : \C \to \D$ and $G : \D \to \E$, we write $G \cdot F : \C \to \E$ for their composition.
We use the same notation for composition of a functor with a natural transformation (sometimes called \enquote{whiskering}),
as in $\tau \cdot F$ and $G \cdot \tau$.

\begin{definition}[pointed functors]\label{def:pointed_functor}
  Let \C be a category. We denote by $\Ptd(\C)$ the category of
  pointed endofunctors on \C, an object of which is a pair $(X,\eta)$
  of an endofunctor $X$ on \C and a natural transformation $\eta :
  \Id{\C} \to X$, called a \enquote{point} of $X$, where $\Id{\C}$ is the identity functor on $\C$.
  Morphisms of pointed
  functors are natural transformations between the underlying
  endofunctors that are compatible with the chosen points. Call $U$ the
  forgetful functor from $\Ptd(\C)$ to the underlying endofunctor
  category $[\C,\C]$ (in particular, for a morphism $f$, $Uf$ is $f$, but its compatibility with the points is not taken into account in the type information---justifying to confuse $Uf$ and $f$ in the rest of the paper).
\end{definition}

\begin{definition}[monoidal structure on functor categories]\label{def:monoidal_endofunctors}
The monoidal structure on the endofunctor
  category $[\C,\C]$ given by composition extends to $\Ptd(\C)$.
We denote by $\alpha_{X,Y,Z} : X \cdot (Y \cdot Z) \simeq (X \cdot Y) \cdot Z$, $\rho_X : \Id{\C} \cdot X \simeq X$ and $\lambda_X : X \cdot \Id{\C} \simeq X$ the monoidal isomorphisms.
\end{definition}

\begin{remark}
 In \cite{DBLP:journals/tcs/MatthesU04}, the authors implicitly assume the monoidal structures on $[\C,\C]$ and $\Ptd(\C)$ to be strict.
 In univalent foundations, \enquote{strict} should mean \enquote{the same modulo definitional equality}; %%% Ralph: sollte convertibility hier nicht ebenfalls definitional equality sein? % Benedikt: ja, geändert
 the monoidal structures are not strict for this notion of strictness.
 Instead, we need to explicitly insert 
 the isomorphisms (which correspond to propositional equalities in \emph{univalent} categories,
 but that shall not be of importance in the following).
 Note, however, that those isomorphisms are given by families of identity morphisms,
 and thus do not carry any information at all; they are merely needed to 
 formally adjust the type of source and target functors of the natural transformations involved
 in order to allow composing two natural transformations which would not be composable otherwise.
 Indeed, composability of two natural transformations $\alpha : F \to G$ and $\beta: G' \to H$
 depends on $G$ being \emph{definitionally equal} to $G'$.
\end{remark}

\begin{definition}[algebras of a functor]\label{def:algebra}
  For an endofunctor $F : \C \to \C$, the category $\Alg(F)$ of \fat{algebras} has, as objects,
  pairs $(X,\alpha)$ of an object $X : \C_0$ and a morphism $\alpha : \C(FX,X)$.
  For a given algebra $(X,\alpha)$, we call $X$ the \fat{(algebra) carrier} of the algebra.
  A morphism $f : \Alg(F)\bigl((X,\alpha),(X',\alpha')\bigr)$ is given by a morphism
  $f : \C(X,X')$ such that $f \circ \alpha = \alpha' \circ Ff$.
\end{definition}

\begin{convention}
  We are using the arrow symbol \enquote{$\to$} for three different things:
  \begin{enumerate}
   \item morphisms $f : c \to d$ in a category, as shorthand for $f : \C(c,d)$
         (hence in particular for natural transformations as morphisms in functor categories);
   \item functors $F : \C \to \D$ between categories; and
   \item type-theoretic functions $f : A \to B$.
  \end{enumerate}
  Information on what the arrow denotes in each occurrence will be deducible from the context.
\end{convention}

\begin{definition}[monads]\label{def:monads}
  For a category $\C$, the category $\Mon(\C)$ of \fat{monads} has, as objects, triples $(T,\eta,\mu)$ of an endofunctor $T$ of $\C$, and natural transformations $\eta:\Id{\C}\to T$ and $\mu:T\cdot T\to T$ (using our convention on natural transformations), subject to the usual monad laws. A morphism $f:\Mon(\C)\bigl((T,\eta,\mu),(T',\eta',\mu')\bigr)$ is given by a natural transformation $f:T\to T'$, subject to the usual compatibility conditions.
\end{definition}
Notice that we follow \cite{DBLP:journals/tcs/MatthesU04} in taking monad multiplication $\mu$ as third component of a monad and not the binding operation that is more widespread in computer science literature.

\begin{convention}
  Given $d : \D$ and a category $\C$, we call $\constfunctor{d} : \C \to \D$ the functor that
  is constantly $d$ and $\identity_d$ on objects and morphisms, respectively.
  This notation hides the category $\C$, which will usually be deducible from the context.
  In this article, $\C$ will always be $\D$.
\end{convention}

\section{Generalized Iteration in Mendler-style and fusion law}\label{sec:mendler}

In this section we discuss \enquote{generalized iteration in Mendler-style} and a fusion law
that one can prove for this iteration scheme.
Both the iteration scheme and the fusion law are used in Section \ref{sec:initiality}.

\begin{lemma}[Generalized iteration in Mendler-style (Theorem 2 of \cite{DBLP:journals/fac/BirdP99} by Bird and Paterson)]\label{lem:gen_mendler_it}
 Let $\C$ be a category, and let $F : \C\to\C$ be an endofunctor on $\C$.
 Suppose $(\mu F,\In{})$ is the initial algebra of $F$.
 Let $\D$ be another category,
 and let $\C : L\dashv R : \D$ be an adjunction.
 Let $X : \D_0$ be an object of $\D$,
 and let
 \[ \Psi : \D(L -, X) \to \D(L(F -) , X) \]
 be a natural transformation.
 Then there is exactly one morphism $h : L(\mu F) \to X$ such that
 the following diagram commutes:
 \[
  \begin{xy}
    \xymatrix {
                    L(F(\mu F)) \ar[r]^-{L\In{}}  \ar[rd]_{\Psi_{\mu F}(h)}& L(\mu F) \ar[d]^{h} \\
                                &  X
      }
  \end{xy}
 \]
\end{lemma}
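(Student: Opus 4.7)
The plan is to reduce the statement to the initiality of $(\mu F, \In{})$ in $\Alg(F)$ by transposing everything across the adjunction $L \dashv R$, thereby promoting $\Psi$ to an honest $F$-algebra structure on $RX$.

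First, I would use the natural isomorphism $\phi_{A,Y} : \D(LA, Y) \simeq \C(A, RY)$ of the adjunction to transpose $\Psi$ to a natural transformation
\[ \tilde\Psi : \C(-, RX) \to \C(F-, RX), \qquad \tilde\Psi_A := \phi \circ \Psi_A \circ \phi^{-1}. \]
Naturality of $\phi$ in its first argument ensures that morphisms $h : L(\mu F) \to X$ correspond bijectively to morphisms $k : \mu F \to RX$ via $k := \phi(h)$, and that the equation $h \circ L\In{} = \Psi_{\mu F}(h)$ for $h$ is equivalent to
\[ k \circ \In{} \;=\; \tilde\Psi_{\mu F}(k) \]
for the transpose $k$.

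Next, I would define an $F$-algebra structure on $RX$ by $\alpha := \tilde\Psi_{RX}(\identity_{RX}) : F(RX) \to RX$. Applying the naturality square of $\tilde\Psi$ with respect to $k : \mu F \to RX$, starting from $\identity_{RX} \in \C(RX, RX)$, yields
\[ \tilde\Psi_{\mu F}(k) \;=\; \tilde\Psi_{\mu F}(\identity_{RX} \circ k) \;=\; \tilde\Psi_{RX}(\identity_{RX}) \circ Fk \;=\; \alpha \circ Fk. \]
Hence the equation for $k$ becomes $k \circ \In{} = \alpha \circ Fk$, which says precisely that $k$ is an $F$-algebra morphism from $(\mu F, \In{})$ to $(RX, \alpha)$. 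By initiality of $(\mu F, \In{})$ there is exactly one such $k$, and its transpose $h := \phi^{-1}(k)$ is then the unique morphism satisfying the original diagram.

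The proof is conceptually routine; the main obstacle is bookkeeping. One must carefully track both naturality squares (that of $\Psi$ and that of $\phi$) with the correct contravariance in the first argument, and rely on $\phi$ being a bijection in order to transfer uniqueness from $k$ back to $h$. In a \UniMath formalization, additional care is needed because, as remarked earlier, the relevant functor compositions are not strict up to definitional equality, so monoidal and functorial coherence isomorphisms may have to be inserted before the chains of equalities type-check.
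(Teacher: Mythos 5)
Your proof is correct and is exactly the standard argument behind Bird and Paterson's Theorem~2, which the paper cites rather than reproves: transpose $\Psi$ along the hom-set equivalence of $L\dashv R$ (this is precisely what the auxiliary file \texttt{AdjunctionHomTypesWeq.v} of the formalization provides), extract the algebra structure $\alpha:=\tilde\Psi_{RX}(\identity_{RX})$ on $RX$ by Yoneda-style naturality, and conclude by initiality of $(\mu F,\In{})$. No gaps; the bijectivity of the transposition correctly carries both existence and uniqueness back to $h$.
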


  We call $\giter{F}{L}{}{\Psi}:=h$ the unique morphism thus specified.

  \medskip 

Note that, strictly speaking, the functors occurring in the type of $\Psi$ have to be the opposites of $L$ and $F$.

The link with the work by Mendler
\cite{DBLP:journals/apal/Mendler91} is not made in the original
proof \cite{DBLP:journals/fac/BirdP99} of the lemma. The
presentation in \cite{DBLP:journals/fac/BirdP99} is very much
oriented towards functional programming. In their notation, the
natural transformation $\Psi$ would be typed as
$$\Psi::\forall A.\,(LA\to X)\to (L(FA)\to X)\enspace.$$
 
The existence of the right adjoint $R$ for $L$ is rather a matter of
technical convenience: it can be replaced by asking for the
preservation of colimits of chains by $F$ and $L$ and the preservation
of initiality by $L$ \cite[Theorem 1]{DBLP:journals/fac/BirdP99}, but
we do not pursue that alternative in our formalization. 

In \cite{DBLP:journals/tcs/MatthesU04}, only a specialized form of
generalized iteration in Mendler-style is used that is called ``generalized
iteration'' (again with no hint to Mendler's work---see our remarks in
Section~\ref{sec:initiality} on the connection). The specialization
consists in taking only natural transformations $\Psi$ of a specific
form (so that $\Psi$ disappears from the formulation, as explained in
\cite{DBLP:journals/tcs/MatthesU04}). In fact, we do not need the
fuller generality of generalized iteration in Mendler-style (in
Sections~\ref{sec:initiality} and \ref{sec:explicit_flattening}) but
the formulation of the fusion law to come next is more natural in the
more general setting (no fusion law was needed in
\cite{DBLP:journals/tcs/MatthesU04} since no \emph{morphisms} of
heterogeneous substitution systems were considered there).

The next lemma shows a sufficient condition for two applications of the iterator
   $\giter{}{}{}{-}$ to be related:

\begin{lemma}[Fusion law]\label{fusion_law}
   Suppose the data as given in Lemma~\ref{lem:gen_mendler_it}.
   Additionally, let $L' : \C\to\D$ be a functor, $X' : \D_0$ be an object of $\D$,
 let
    \[
     \Psi' : \D(L' -, X') \to \D(L'(F -) , X')
    \]
     be a natural transformation with type analogous to that of $\Psi$, and
   let
    \[
      \Phi : \D(L -, X) \to \D(L' - , X')
    \]
    be a natural transformation.
     Then we have
     \[
         \Phi_{\mu F}\bigl(\giter{F}{L}{}{\Psi}\bigr) = \giter{F}{L'}{}{\Psi'}
     \]
     if
     \[
        \Phi_{F\mu F} \circ \Psi_{\mu F} = \Psi'_{\mu F} \circ \Phi_{\mu F} \enspace .
     \]

\end{lemma}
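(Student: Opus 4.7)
The plan is to invoke the uniqueness clause of Lemma~\ref{lem:gen_mendler_it} applied to the data $(L', X', \Psi')$: it suffices to show that the morphism $\Phi_{\mu F}\bigl(\giter{F}{L}{}{\Psi}\bigr) : L'(\mu F) \to X'$ satisfies the defining triangle for $\giter{F}{L'}{}{\Psi'}$, namely
\[
  \Phi_{\mu F}\bigl(\giter{F}{L}{}{\Psi}\bigr) \circ L'\In{} \;=\; \Psi'_{\mu F}\!\left(\Phi_{\mu F}\bigl(\giter{F}{L}{}{\Psi}\bigr)\right).
\]

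Write $h \defeq \giter{F}{L}{}{\Psi}$ for brevity. By its defining property, $h \circ L\In{} = \Psi_{\mu F}(h)$. Applying $\Phi_{F\mu F}$ to both sides gives
\[
  \Phi_{F\mu F}\bigl(h \circ L\In{}\bigr) \;=\; \Phi_{F\mu F}\bigl(\Psi_{\mu F}(h)\bigr).
\]
The right-hand side is precisely $(\Phi_{F\mu F} \circ \Psi_{\mu F})(h)$, which by the hypothesis equals $(\Psi'_{\mu F} \circ \Phi_{\mu F})(h) = \Psi'_{\mu F}(\Phi_{\mu F}(h))$.

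The key remaining step is to rewrite the left-hand side using naturality of $\Phi$. Both $\D(L-, X)$ and $\D(L'-, X')$ are contravariant in the hole, so naturality of $\Phi$ at the morphism $\In{} : F\mu F \to \mu F$ yields, for any $f : L(\mu F) \to X$, the identity $\Phi_{F\mu F}(f \circ L\In{}) = \Phi_{\mu F}(f) \circ L'\In{}$. Instantiating with $f \defeq h$ transforms the left-hand side into $\Phi_{\mu F}(h) \circ L'\In{}$, and combining with the previous paragraph establishes the required triangle. Uniqueness in Lemma~\ref{lem:gen_mendler_it} then delivers the claimed equality.

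The only subtle point is bookkeeping of the variances: one must be careful that naturality of $\Phi$ is applied in its contravariant argument, so the morphism it acts upon is $\In{}$ (not some map out of $\mu F$), and correspondingly the whiskering on the image side is $L'\In{}$ rather than $L\In{}$. Once that is settled, the proof is a short chain of three rewrites (defining equation for $h$, the hypothesis, naturality of $\Phi$) followed by an appeal to uniqueness; no further structure of the adjunction $L \dashv R$ is needed beyond what is already encoded in the existence and uniqueness statement of generalized Mendler iteration.
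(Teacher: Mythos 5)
Your proof is correct and follows the same route as the paper: verify that $\Phi_{\mu F}\bigl(\giter{F}{L}{}{\Psi}\bigr)$ satisfies the defining equation of $\giter{F}{L'}{}{\Psi'}$ via the defining equation of $\giter{F}{L}{}{\Psi}$, the hypothesis, and naturality of $\Phi$ in its contravariant argument, then conclude by uniqueness. The paper adds only the side remark that this argument simultaneously establishes \emph{existence} of the right-hand side, so a right adjoint for $L'$ is not strictly needed (though their formalization, like your appeal to the uniqueness clause of Lemma~\ref{lem:gen_mendler_it} for $(L',X',\Psi')$, does assume one).
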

The name ``fusion law'' is wide-spread in functional programming for
means to eliminate the creation of some extra structure, here the
subsequent calculation of $\Phi_{\mu F}$ for the result
$\giter{F}{L}{}{\Psi}$ of the iteration over $\mu F$ is ``fused'' into
one single iteration over $\mu F$---the right-hand side of the
conclusion.

\smallskip
The version of this fusion law with $X$ and $X'$ the same object of
$\D$ and instantiated to the special situation of generalized folds
(see Section~\ref{sec:initiality}) has been found by Bird and Paterson
\cite{DBLP:journals/fac/BirdP99} (see right before their Theorem 1). While we
will only use the fusion law for generalized folds (in
Section~\ref{sec:initiality}), it is necessary to have the liberty in
choosing $X$ and $X'$ separately. The proof itself is a matter of
verifying that the left-hand side satisfies the defining equation
(embodied in the commuting diagram in Lemma~\ref{lem:gen_mendler_it})
of the right-hand side. This also settles existence of the right-hand
side, which is why we did not require a right adjoint for $L'$, which
would have allowed us to invoke Lemma~\ref{lem:gen_mendler_it} also
for $\Psi'$. (In our formalization, we did not implement this subtlety
but require a right adjoint for $L'$, in order to use the definition
of the $\giter{}{}{}{-}$ operator underlying the formalization of
Lemma~\ref{lem:gen_mendler_it}.)

\section{The category of heterogeneous substitution systems}\label{sec:hss}
 In \cite{DBLP:journals/tcs/MatthesU04}, implicitly there is a notion of signature.
 Here, we make this definition explicit and adapt it to the lack of strictness of our monoidal structures
 on endofunctors (see Definition \ref{def:monoidal_endofunctors}) -- recall that $U$ ``forgets'' the points of pointed functors:

\begin{definition}[Signature]\label{def:signature}
  Given a category $\C$, a \fat{signature} is a pair $(H,\theta)$ of an endofunctor $H$ on $[\C,\C]$ and a natural transformation
  $\th{}: (H {-}) \cdot U {\sim} \arr H ({-} \cdot U {\sim})$
  between functors
  $[\C,\C] \times \Ptd(\C) \arr [\C,\C]$
  such that
  \[ \th{X, \idwt{\Ptd(\C)}} = H(\lambda^{-1}_X) \comp \lambda_{HX} \]   
  and
  \[  \th{X, (Z' \cdot Z, e' \cdot e)} = H(\alpha^{-1}_{X,Z',Z}) \comp   \th{X \cdot Z', (Z,
    e)} \comp (\th{X, (Z', e')} \cdot Z) \comp \alpha_{HX, Z', Z} \enspace . \]   
\end{definition}

In practice, a signature is given by a family of \emph{arities}, each arity specifying the type of a term constructor.
The above definition of signature is modular in the sense that building a signature from arities corresponds to taking
an amalgamated sum. This is explained in detail in Section \ref{sec:explicit_flattening}, to which we refer
for an example of signature.

Note that while the definition of signature does not require the base category $\C$ to have coproducts,
this is a requirement for most signatures that we consider in practice, and in particular for the
example of Section \ref{sec:explicit_flattening}.
It also is a requirement for the definition of \enquote{models} of that signature, see Definition \ref{def:subst-systems}.

\begin{convention}
 From now on, we assume the category $\C$ to have (specified) coproducts.
 We denote by $\inl_{A,B} : A \to A + B$ and $\inr_{A,B} : B \to A + B$ the maps into the coproduct.
 We omit the subscripts of $\inl$ and $\inr$ when possible without ambiguity.
\end{convention}

\begin{remark}
  The notion of signature introduced in Definition \ref{def:signature} encompasses
  \enquote{polynomial} signatures like the ones described in \cite{fpt} and \cite{DBLP:conf/ppdp/MiculanS03}.
  In fact, it is strictly more general in that it also encompasses the arity of explicit
  flattening---the Example \ref{def:arity_flat} we discuss in detail in Section \ref{sec:explicit_flattening}---that is not captured by
  the other works mentioned above.
\end{remark}

For a given signature $(H,\theta)$, we are interested in $(\constfunctor{\Id{\C}}+H)$-algebras
$(T,\alpha)$.
For such an algebra, the natural transformation $\alpha : \Id{\C} + HT \to T$
decomposes
into two $[\C,\C]$-morphisms
$\et{} : \Id{C} \arr T$, $\ta{} : H T \arr T$ defined by
\begin{equation}
\begin{array}{c}
\et{} = \al{} \comp \inl_{\Id{\C},HT} \qquad\mbox{and}\qquad
\ta{} = \al{} \comp \inr_{\Id{\C},HT} \enspace .
\end{array}\label{hss_eta_tau}
\end{equation}
The pair $(T,\eta)$ is an object in the category of pointed functors (see Definition \ref{def:pointed_functor}).

Intuitively, in the case where $\C = \Set$,
the transformation $\et{}$ corresponds to viewing variables $x : X$ as \enquote{terms},
that is, as elements of $TX$
whereas $\ta{} : HT \to T$ represents the recursive constructors specified by $H$.

\begin{definition}[Def.~5 of \cite{DBLP:journals/tcs/MatthesU04}, Heterogeneous substitution system of a signature]\label{def:subst-systems}
We call $(T, \al{})$ a \emph{heterogeneous substitution system} (or \enquote{hss} for short) for
$(H, \th{})$, if, for every $\Ptd(\C)$-morphism $f : (Z,e) \arr (T,
\et{})$, there exists a unique $[\C,\C]$-morphism $h : T \cdot Z \arr
T$, denoted $\gst{(Z,e)}{f}$, satisfying
\[
\begin{array}{c}
\xymatrix@C2pc@R1pc{
**[l] Z + (H T) \cdot Z \ar[d]_{\idwt{Z} + \th{T,(Z,e)}} \ar[r]^-{\al{} \cdot Z}
  & T \cdot Z \ar[dd]^{h} \\
**[l] Z + H (T \cdot Z) \ar[d]_{\idwt{Z} + H h}
  & \\
**[l] Z + H T \ar[r]^-{\copair{f}{\ta{}}}
  & T
}
\hspace{2mm} \mbox{i.e.,} \hspace{2mm}
\xymatrix@C3pc@R1pc{
Z \ar[r]^{\et{} \cdot Z} \ar[ddr]_{f}
    & T \cdot Z \ar[dd]^{h}
        & (H T) \cdot Z  \ar[l]_-{\ta{} \cdot Z} \ar[d]^{\th{T, (Z,e)}} \\
    &
        & H (T \cdot Z) \ar[d]^{H h} \\
    & T
        & H T  \ar[l]_-{\ta{}}
}
\end{array}
\]
For a substitution system $(T,\alpha,\gst{}{-})$, we call $T$ its \fat{carrier}, thus extending
the convention of Definition \ref{def:algebra}.
\end{definition}
Notice that the quantification is implicitly also over all pointed endofunctors $(Z,e)$ on $\C$.

In the following, we sometimes omit the word \enquote{heterogeneous} when talking about heterogeneous substitution systems.

\begin{remark}\label{rem:bracket_prop}
Being equipped with a \enquote{bracket} operation $\gst{}{-}$ is a proposition on $(\constfunctor{\Id{\C}} + H)$-algebras.
\end{remark}
Notice that we call the operation a bracket operation although we write it with braces, to distinguish it from the bracket notation used for parallel substitution in the introduction.

\smallskip
The statement of the following lemma is mentioned, but not proven in \cite{DBLP:journals/tcs/MatthesU04}:

\begin{lemma}
  The operation $\gst{}{-}$ is a natural transformation \[\Ptd(- , (T,\eta)) \to [\C,\C](T \cdot U - , T) \enspace . \]
\end{lemma}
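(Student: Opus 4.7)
My plan is to prove naturality by unfolding the statement and invoking the uniqueness clause of Definition~\ref{def:subst-systems}. Concretely, naturality of $\gst{}{-}$ amounts to the equation
\[
  \gst{(Z', e')}{f \circ g} = \gst{(Z, e)}{f} \circ (T \cdot g)
\]
for every morphism $g : (Z', e') \to (Z, e)$ in $\Ptd(\C)$ and every $f : (Z, e) \to (T, \eta)$, where the right-hand side is to be read as a morphism $T \cdot Z' \to T$ after inserting the appropriate associators of Definition~\ref{def:monoidal_endofunctors} (which I suppress in the rest of this sketch).

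I would fix such $g$ and $f$, set $h \mathrel{:=} \gst{(Z, e)}{f} \circ (T \cdot g)$, and verify that $h$ satisfies the two triangles defining $\gst{(Z', e')}{f \circ g}$; the conclusion then follows from uniqueness in Definition~\ref{def:subst-systems} applied to the pointed-functor morphism $f \circ g : (Z', e') \to (T, \eta)$.

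For the triangle involving $\eta$, I would rewrite
$h \circ (\eta \cdot Z') = \gst{(Z, e)}{f} \circ (T \cdot g) \circ (\eta \cdot Z') = \gst{(Z, e)}{f} \circ (\eta \cdot Z) \circ g = f \circ g$,
using naturality of $\eta$ at $g$ for the middle step and the $\eta$-triangle of $\gst{(Z, e)}{f}$ for the last. For the triangle involving $\tau$, I would chain
\begin{align*}
 h \circ (\tau \cdot Z') &= \gst{(Z, e)}{f} \circ (\tau \cdot Z) \circ (HT \cdot g) \\
 &= \tau \circ H(\gst{(Z, e)}{f}) \circ \th{T, (Z, e)} \circ (HT \cdot g) \\
 &= \tau \circ H(\gst{(Z, e)}{f}) \circ H(T \cdot g) \circ \th{T, (Z', e')} \\
 &= \tau \circ Hh \circ \th{T, (Z', e')} \enspace ,
\end{align*}
invoking in turn the interchange law for whiskering (to move $T \cdot g$ past $\tau \cdot Z'$), the $\tau$-triangle of $\gst{(Z, e)}{f}$, naturality of $\th{}$ in its $\Ptd(\C)$-argument applied to $g$, and functoriality of $H$.

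The hard part, I expect, will not be the conceptual argument but the coherence bookkeeping: because the monoidal structures on $[\C,\C]$ and $\Ptd(\C)$ are not strict, associators and unitors have to be threaded through the $\cdot$-compositions above, both in the original triangles of Definition~\ref{def:subst-systems} and in the two equations of Definition~\ref{def:signature} governing $\th{}$. In the formalization this bookkeeping will dominate the effort, but the mathematical content is exactly the short chain of rewritings sketched above together with one appeal to uniqueness.
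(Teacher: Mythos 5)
Your argument is correct and is the expected one: the paper states this lemma without giving a written proof (the argument lives only in the formalization, in \texttt{SubstitutionSystems.v}), and that proof proceeds exactly as you describe, by checking that $\gst{(Z,e)}{f}\circ (T\cdot Ug)$ satisfies the two defining triangles for $\gst{(Z',e')}{f\circ g}$ and then invoking the uniqueness clause of Definition~\ref{def:subst-systems}. Your identification of the needed ingredients---the interchange/whiskering step (i.e.\ naturality of $\eta$ and $\tau$ at the components of $g$), naturality of $\theta$ in its $\Ptd(\C)$-argument, and functoriality of $H$---is precisely right, and your closing caveat about threading the non-strict associators and unitors through the computation matches the paper's own remark that these coherence isomorphisms are families of identity morphisms inserted only for typing reasons.
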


Note that the substitution operation given by the bracket is not categorical in the sense that it is not given by
a universal property. This is due to the fact that we prefer an operational point of view, where things actually compute,
over a categorical one.
Having substitution given as an operation rather than via a universal property is also crucial
for obtaining a monad, that is, for the main theorem of \cite[Thm.\ 10]{DBLP:journals/tcs/MatthesU04}.

\begin{definition}[Category of substitution systems]\label{def:cat_of_hss}
 Given $(H,\theta)$ as before, the category $\HSS(H,\theta)$ has, as objects, heterogeneous substitution systems as in \Cref{def:subst-systems}.
 A morphism of substitution systems is an algebra morphism that is compatible with
 the bracket $\gst{}{}$ on either side.
 In terms of $\eta$ and $\tau$ as defined in Equation \eqref{hss_eta_tau},
 a morphism from $(T,\eta,\tau,\gst{}{})$ to $(T', \eta', \tau', \gst{}{}')$ is a natural transformation $\beta : T \to T'$ such that
 the following diagrams commute:
 \[
  \begin{xy}
   \xymatrix{\Id{} \ar[r]^{\eta} \ar[rd]_{\eta'}  & T \ar[d]^{\beta} \\
                  & T'}
  \end{xy} \hspace{10pt}
  \begin{xy}
   \xymatrix{
        HT \ar[r]^{\tau} \ar[d]_{H\beta}& T\ar[d]^{\beta} \\
        HT' \ar[r]_{\tau'}& T'
   }
  \end{xy} \hspace{10pt}
  \begin{xy}
   \xymatrix{
       T \cdot Z \ar[r]^{\gst{}{f}} \ar[d]_{\beta \cdot Z} & T \ar[d]^{\beta} \\
       T'\cdot Z \ar[r]_-{\gst{}{\beta \circ f}'}& T'
   }
  \end{xy}
 \]
Here, the first and second diagram express the property of $\beta$ being an algebra morphism, and the third diagram expresses
compatibility of $\beta$ with substitution on either side.

Note that the composite $\beta \circ f$ in the last diagram is the composite in the category of \fat{pointed} endofunctors,
that is, the definition of that composite uses commutativity of the first diagram.
\end{definition}

\begin{remark}\label{rem:bracket_mor_prop}
Similarly to Remark~\ref{rem:bracket_prop}, being compatible with the brackets on either side is a proposition on algebra morphisms.
\end{remark}

We now study the category $\HSS(H,\th{})$ of substitution systems associated to a signature in
more detail, in particular with respect to the particular foundations we are working in.
The main objective of the rest of the section is Theorem~\ref{thm:hss_sat}: the category $\HSS(H,\th{})$
is univalent if the base category $\C$ is.

Remarks \ref{rem:bracket_prop} and \ref{rem:bracket_mor_prop} together
show that the category of $\HSS(H,\theta)$ can be obtained as a \emph{subcategory}
of the category of $(\constfunctor{\Id{}}+H)$-algebras in the following sense:

\begin{definition}
  A \fat{subcategory} of a category $\C$ is given by
  a predicate $P : \C_0 \to \prop$ and a family of predicates $P_{a,b} : P(a)\times P(b)\times \C(a,b) \to \prop$
  that is closed under identity and composition in the sense that
  \begin{itemize}
   \item for any $a : \C_0$ satisfying $P$, have a proof of $P_{a,a}(\identity(a))$ and
   \item for any $a,b,c:\C_0$ satisfying $P$, and for any $f : \C(a,b)$ and $g:\C(b,c)$, have
       a map $P_{a,b}(f) \to P_{b,c}(g) \to P_{a,c}(g\circ f)$.
  \end{itemize}
 We suppress the arguments of type $P(a)$ and $P(b)$ when discussing the predicate $P_{a,b}(f)$, since those arguments are unique.
\end{definition}

A subcategory of $\C$ is---better, gives rise to---a category $\C_P$;
objects are of the form $\sm{x:\C_0}P(x)$,
and morphisms $(f,p_f) : \C_P\bigl((a,p_a),(b,p_b)\bigr)$ are pairs of a morphism $f : \C(a,b)$ of $\C$
together with a proof $p : P_{a,b}(f)$.

Given a signature $(H,\th{})$, define a subcategory of the category of  $(\constfunctor{\Id{}}+H)$-algebras via the predicates of
Remarks \ref{rem:bracket_prop} and \ref{rem:bracket_mor_prop}.
The resulting category is clearly isomorphic to $\HSS(H,\theta)$ in the sense of \cite[Definition~6.9]{rezk_completion}.

Note that isomorphic categories are equal modulo propositional equality \cite[Definition~6.16]{rezk_completion},
and hence share all properties definable in type theory.
We thus give up the distinction between the category $\HSS(H,\theta)$ and the subcategory of $(\constfunctor{\Id{}}+H)$-algebras it is isomorphic to.

A subcategory is called \fat{replete}, when it is closed under isomorphism, that is,
when, for $f : \iso_\C(a, b)$ and $P(a)$, it follows that $P(b)$ and $P_{a,b}(f)$.

\begin{lemma}\label{lem:hss_replete}
 The category $\HSS(H,\theta)$ is a replete subcategory of the category of $(\constfunctor{\Id{}}+H)$-algebras.
\end{lemma}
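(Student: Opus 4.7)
The plan is to show that if $\beta : (T,\alpha) \to (T',\alpha')$ is an algebra isomorphism and $(T,\alpha)$ carries an hss structure $\gst{}{-}$, then one can transport this structure along $\beta$ to an hss structure on $(T',\alpha')$ that simultaneously makes $\beta$ a morphism in $\HSS(H,\theta)$. Thanks to Remarks~\ref{rem:bracket_prop} and~\ref{rem:bracket_mor_prop}, both pieces of data are propositional, so it suffices to \emph{exhibit} a bracket on the $T'$-side and then observe that $\beta$ satisfies the third square of Definition~\ref{def:cat_of_hss}; no further coherences need be checked.

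First I would note that $\beta$ and $\beta^{-1}$ are algebra morphisms, and hence, via Equation~\eqref{hss_eta_tau}, also morphisms in $\Ptd(\C)$ between $(T,\eta)$ and $(T',\eta')$. Given $g : (Z,e) \to (T',\eta')$ in $\Ptd(\C)$, one can therefore form $\beta^{-1} \circ g : (Z,e) \to (T,\eta)$ and set
\[
\gst{(Z,e)}{g}' \;:=\; \beta \circ \gst{(Z,e)}{\beta^{-1} \circ g} \circ (\beta^{-1} \cdot Z) \;:\; T' \cdot Z \to T'\enspace.
\]

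The core verification is that $\gst{(Z,e)}{g}'$ satisfies the defining diagram of Definition~\ref{def:subst-systems} on the $T'$-side. This is a diagram chase pasting together: the defining diagram of $\gst{(Z,e)}{\beta^{-1} \circ g}$ on the $T$-side; the equations $\beta^{-1} \circ \eta' = \eta$ and $\beta^{-1} \circ \tau' = \tau \circ H\beta^{-1}$ together with the dual equations for $\beta$ (all coming from $\beta$ and $\beta^{-1}$ being algebra morphisms, with $\eta$, $\tau$, $\eta'$, $\tau'$ as in Equation~\eqref{hss_eta_tau}); and the naturality of $\theta$ in its first argument applied to $\beta^{-1} : T' \to T$. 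Uniqueness follows symmetrically: any $h' : T' \cdot Z \to T'$ solving the $T'$-side diagram for $g$ yields a $T$-side solution $\beta^{-1} \circ h' \circ (\beta \cdot Z)$ for $\beta^{-1} \circ g$, which by uniqueness on the $T$-side coincides with $\gst{(Z,e)}{\beta^{-1} \circ g}$, and rearranging gives $h' = \gst{(Z,e)}{g}'$.

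Finally, compatibility of $\beta$ with the brackets is immediate from the construction: for any $f : (Z,e) \to (T,\eta)$ in $\Ptd(\C)$, the composite $\beta \circ f$ lies in $\Ptd(\C)\bigl((Z,e),(T',\eta')\bigr)$, and
\[
\gst{(Z,e)}{\beta \circ f}' \circ (\beta \cdot Z) \;=\; \beta \circ \gst{(Z,e)}{f} \circ (\beta^{-1} \cdot Z) \circ (\beta \cdot Z) \;=\; \beta \circ \gst{(Z,e)}{f}\enspace,
\]
which is precisely the third square of Definition~\ref{def:cat_of_hss}. The main obstacle I expect is the diagram pasting of the third paragraph; everything else is a mechanical transport of structure along the isomorphism $\beta$.
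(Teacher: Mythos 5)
Your proposal is correct and follows essentially the same route as the paper's proof: the same transported bracket $\gst{}{f}' = \beta \circ \gst{}{\beta^{-1}\circ f}\circ(\beta^{-1}\cdot Z)$, the same uniqueness argument by conjugating a putative solution back to the $T$-side, and the same observation that compatibility of $\beta$ then follows by direct computation. The only difference is that you spell out the final compatibility check and the role of propositionality slightly more explicitly than the paper does.
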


\begin{proof}
  Given a substitution system $(T,\alpha,\gst{}{-})$, an algebra $(T',\alpha')$ and an algebra isomorphism
  $\beta : (T,\alpha) \to (T',\alpha')$, we define a bracket $\gst{}{-}'$ on $(T',\alpha')$ as follows:
  for a given pointed morphism $f : (Z,e) \to (T',\eta')$, we define $\gst{}{f}'$ as the composition
  \begin{equation*}
    \gst{}{f}' :=  \beta \circ \gst{}{\beta^{-1} \circ f} \circ \beta^{-1}\cdot Z \enspace : \enspace T' \ot T \ot  T\cdot Z \ot T'\cdot Z %\to TZ \to T\to T'
  \end{equation*}
  The morphism $\gst{}{f}'$ thus defined satisfies the equations of Definition~\ref{def:subst-systems},
  \begin{align*}
     f &=  \gst{}{f}' \circ \eta'\cdot Z \\
     \gst{}{f}'\circ  \tau'\cdot Z &= \tau' \circ H(\gst{}{f}') \circ \th{T',(Z,e)} \enspace ;
  \end{align*}
  the calculation is routine.
  Concerning the uniqueness of $\gst{}{f}'$, suppose $h$ such that
  these equations with $h$ in place of $\gst{}{f}'$ are satisfied.
  We have to show that $h = \beta \circ \gst{}{\beta^{-1} \circ f} \circ \beta^{-1}\cdot Z$.
  Equivalently, one can show that
  \begin{equation}
    \gst{}{\beta^{-1} \circ f} = \beta^{-1} \circ h \circ \beta\cdot Z \enspace , \label{eq:uniqueness}
  \end{equation}
  which follows from the uniqueness of $\gst{}{-}$: it suffices to show that the right-hand side of
  \eqref{eq:uniqueness} satisfies the equations involving $\eta$ and $\tau$.
  We thus have equipped $(T',\alpha')$ with a (necessarily unique) substitution operation.

  The fact that $\beta$ is compatible with $\gst{}{-}$ and $\gst{}{-}'$, and hence in the subcategory, is a routine calculation.
\end{proof}

\begin{theorem}\label{thm:hss_sat}
 The category $\HSS(H,\theta)$ is univalent if \C is.
\end{theorem}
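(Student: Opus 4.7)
The strategy is to assemble Theorem~\ref{thm:hss_sat} from three kinds of ingredients: (a) univalence is inherited by the functor category $[\C,\C]$ from $\C$; (b) univalence passes from any category to its category of algebras of an endofunctor; and (c) univalence descends to a replete subcategory carved out by propositional predicates, which is the situation set up in Lemma~\ref{lem:hss_replete}.

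First, I invoke the fact listed in Section~\ref{sec:ct_in_uf} that $[\C,\C]$ is univalent whenever $\C$ is. Second, I use that for any endofunctor $F$ on a univalent category $\D$, the category $\Alg(F)$ is univalent. The argument is standard: an algebra is a dependent pair $(X,\alpha)$ with $\alpha : \D(FX,X)$, so equality of algebras is equivalent (by the characterization of equality in $\Sigma$-types) to a pair consisting of an equality $X = X'$ in $\D$ together with an equality of structure maps modulo transport; univalence of $\D$ replaces the first component by an isomorphism, and compatibility of the structure maps with this isomorphism is exactly the algebra-morphism equation, so one recovers $\idtoiso$ for $\Alg(F)$ as an equivalence. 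Applied to $F = \constfunctor{\Id{\C}}+H$ on $[\C,\C]$, this shows that the category of $(\constfunctor{\Id{\C}}+H)$-algebras is univalent.

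Third, combining Lemma~\ref{lem:hss_replete} with Remarks~\ref{rem:bracket_prop} and~\ref{rem:bracket_mor_prop}, I identify $\HSS(H,\theta)$ with a replete subcategory of $\Alg(\constfunctor{\Id{\C}}+H)$ whose defining predicates on objects and on morphisms are both propositional. For any such subcategory $\D_P$ of a univalent $\D$, univalence transfers: by the $\Sigma$-type equality characterization and propositionality of $P$, an equality $(X,p) = (Y,q)$ in $\D_P$ is equivalent to $X = Y$ in $\D$; by propositionality of $P_{a,b}$ together with repleteness (which supplies that the inverse morphism in $\D$ also lies in $\D_P$ and that $P$ holds for isomorphic objects), $\iso_{\D_P}((X,p),(Y,q))$ is equivalent to $\iso_\D(X,Y)$; chaining these equivalences with $\idtoiso_\D$ and checking compatibility with $\idtoiso_{\D_P}$ by identity elimination yields the result.

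The main obstacle is the bookkeeping in the third step, specifically exhibiting the equivalence $\iso_{\D_P}((X,p),(Y,q)) \simeq \iso_\D(X,Y)$ from repleteness and propositionality: one must verify that given $f : \iso_\D(X,Y)$ and $P(X)$, repleteness supplies $P(Y)$, $P_{X,Y}(f)$ and $P_{Y,X}(f^{-1})$, that the isomorphism equations transport to $\D_P$ automatically (as they live in sets of morphisms after applying $U$), and finally that the candidate inverse $\iso_\D(X,Y) \to \iso_{\D_P}((X,p),(Y,q))$ is indeed inverse to the evident forgetful map—each of these is essentially forced by the propositionality assumptions, but the triangles have to be laid out explicitly to conclude.
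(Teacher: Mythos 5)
Your proposal is correct and follows exactly the paper's own decomposition: univalence of $[\C,\C]$ from $\C$ (Lemma~\ref{lem:functor_cat_sat}), univalence of the category of $(\constfunctor{\Id{}}+H)$-algebras (Lemma~\ref{lem:alg_sat}), and descent to $\HSS(H,\theta)$ as a replete subcategory cut out by propositional predicates (Lemmas~\ref{lem:hss_replete} and~\ref{lem:sub_sat}). The additional detail you supply for the algebra and subcategory steps is consistent with the proofs the paper sketches or defers to the formalization.
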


\begin{proof}
 Combine Lemmas~\ref{lem:alg_sat}, \ref{lem:sub_sat}, \ref{lem:functor_cat_sat} and \ref{lem:hss_replete}.
 More precisely, if $\C$ is univalent, so is $[\C,\C]$, and thus also the category
 of $(\constfunctor{\Id{}}+H)$-algebras on $[\C,\C]$.
 Finally, the category $\HSS(H,\th{})$ is univalent as a replete subcategory of that of $(\constfunctor{\Id{}}+H)$-algebras.
\end{proof}

The following lemmas state closure properties of the property of being univalent:

\begin{lemma}\label{lem:alg_sat}
 The category of algebras of a functor $F : \C \to \C$ is univalent if $\C$ is.
\end{lemma}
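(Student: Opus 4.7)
The plan is to verify the two conditions making up univalence: first, that hom-types of $\Alg(F)$ are sets, and second, that $\idtoiso$ is an equivalence. For the first, a morphism in $\Alg(F)\bigl((X,\alpha),(X',\alpha')\bigr)$ is a pair of a morphism $f : \C(X,X')$ together with a proof that $f \circ \alpha = \alpha' \circ Ff$. Since $\C$ is in particular a category, $\C(X,X')$ is a set, and equalities between its elements form propositions. Hence the hom-type of $\Alg(F)$ is a sigma over a set with propositional fibers, which is again a set.

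For the univalence condition, I would first characterize equalities in the object type $\Sigma X : \C_0.\, \C(FX, X)$ of $F$-algebras via the standard path-in-sigma lemma: an identity $(X,\alpha) = (X',\alpha')$ is equivalent to a pair $(p,q)$ with $p : X = X'$ and $q : \transfibf{\lambda Y.\, \C(FY,Y)}(p, \alpha) = \alpha'$. By univalence of $\C$, the type $X = X'$ is equivalent via $\idtoiso_\C$ to $\iso_\C(X,X')$; and one checks by path induction that, under this equivalence, transporting $\alpha$ along $p$ corresponds to conjugation by the isomorphism $f := \idtoiso_\C(p)$, so the equation $q$ translates exactly to the algebra-morphism condition $f \circ \alpha = \alpha' \circ Ff$. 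Composing these equivalences yields an equivalence between $(X,\alpha) = (X',\alpha')$ and the type of algebra isomorphisms from $(X,\alpha)$ to $(X',\alpha')$.

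The final step is to verify that this composite equivalence coincides pointwise with $\idtoiso$ of the category $\Alg(F)$. This is a routine path-induction argument: on $\refl{(X,\alpha)}$, both sides produce the identity algebra isomorphism on $(X,\alpha)$.

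The main obstacle, and the only place where genuine care is needed, is the computation that the transport of $\alpha$ along $p$ translates into the conjugation $f \circ \alpha \circ F(f^{-1})$ (or equivalently, into the equation $f \circ \alpha = \alpha' \circ Ff$). This mixes transport in the non-dependent factor $\C(F{-},{-})$ with the action of $F$ on the underlying path, and is best handled by induction on $p$, reducing to the reflexivity case where $f = \identity_X$ and $Ff = \identity_{FX}$, so that the condition collapses to $\alpha = \alpha'$, in agreement with $q$.
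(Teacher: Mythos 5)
Your proposal is correct, but note that the paper does not actually give a mathematical argument for this lemma: its ``proof'' is a one-line pointer to the file \texttt{CategoryTheory/FunctorAlgebras.v} of the \UniMath library. What you have written out is, in essence, the standard structure-identity-principle argument that the formalization carries out: hom-types of $\Alg(F)$ are sets because they are sigma-types over the hom-sets of $\C$ with propositional (equational) fibers; identities of algebras are characterized as pairs of an identity $p : X = X'$ and a transport equation; univalence of $\C$ turns $p$ into an isomorphism $f$; and path induction shows that the transport of $\alpha$ along $p$ in the family $Y \mapsto \C(FY,Y)$ corresponds to the condition $f \circ \alpha = \alpha' \circ Ff$, using $F(\identity_X) = \identity_{FX}$ in the reflexivity case. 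The only step you leave implicit that deserves a word is the identification of $\sm{f : \iso_{\C}(X,X')} (f \circ \alpha = \alpha' \circ Ff)$ with the type $\iso_{\Alg(F)}\bigl((X,\alpha),(X',\alpha')\bigr)$: one must check that an algebra morphism whose underlying $\C$-morphism is invertible is an isomorphism of algebras, i.e.\ that the inverse automatically satisfies the algebra-morphism equation (and that being an iso is a proposition, so no further coherence is needed). This is routine, but it is a genuine lemma sitting between your transport computation and the final comparison with $\idtoiso_{\Alg(F)}$. With that point made explicit, your argument is complete and is exactly the kind of proof the paper delegates to the formal development.
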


\begin{proof}
  This lemma is proved in the file \texttt{CategoryTheory/FunctorAlgebras.v} of the \UniMath library.
\end{proof}

The next lemma is originally due to Hofmann and Streicher \cite{hs:gpd-typethy};
and is also proved in Thm.~4.5 of \cite{rezk_completion}:

\begin{lemma}\label{lem:functor_cat_sat}
 The category of functors $[\C,\D]$ is univalent if the target category $\D$ is.
\end{lemma}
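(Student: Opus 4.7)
The plan is to apply the equivalence principle componentwise, using univalence of $\D$ and function extensionality to reduce equality of functors to natural isomorphisms. Concretely, I would construct, for any two functors $F, G : \C \to \D$, an equivalence between $F = G$ and the type of natural isomorphisms $F \cong G$ in $[\C,\D]$, and then verify that this equivalence is in fact $\idtoiso_{F,G}$.

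First, I would observe that a functor $F$ is a tuple consisting of an object-assignment $F_0 : \C_0 \to \D_0$, a morphism-assignment $F_1 : \prd{a,b:\C_0} \C(a,b) \to \D(F_0 a, F_0 b)$, and two equations expressing preservation of identities and composition. Since the hom-types of $\D$ are sets (part of the requirement for $\D$ being a category), those two equations are propositions and contribute nothing to the identity type of functors. Hence $F = G$ is equivalent to an equality of pairs $(F_0, F_1) = (G_0, G_1)$, with the expected transport connecting the components.

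Next, I would apply function extensionality to $F_0 = G_0$ to get a family $\prd{c:\C_0}(F_0 c = G_0 c)$, and then invoke univalence of $\D$ to replace each such pointwise equality by an isomorphism $\gamma_c : \iso(F_0 c, G_0 c)$. The resulting data is a pointwise family of isomorphisms between the object-parts of $F$ and $G$. The heart of the argument is then the analysis of the equality of morphism-parts $F_1 = G_1$ modulo the transport along the object-part equality. A careful unfolding of this transport, combined once more with function extensionality, shows that $F_1$ and $G_1$ are equal modulo this transport exactly when, for every $f : \C(a,b)$, the equation $\gamma_b \circ F_1(f) = G_1(f) \circ \gamma_a$ holds. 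This is precisely the naturality condition for $\gamma$ as a natural transformation $F \to G$, and together with the pointwise invertibility of the $\gamma_c$ it gives an isomorphism $F \cong G$ in $[\C,\D]$.

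The main obstacle is this latter step: tracking how the transport induced by univalence along the object-part equality acts on the morphism-part is the only genuinely technical calculation, and requires the standard lemmas on transport in $\Pi$-types and along equivalences. Once that is done, it remains to verify that the composite equivalence agrees with $\idtoiso_{F,G}$; I would settle this by identity elimination: when the input equality is $\refl{F}$, every ingredient of the construction reduces to an identity, and both sides become the identity natural isomorphism on $F$, so the two maps coincide on all of $F = G$ by the usual path-induction argument.
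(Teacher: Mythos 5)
Your proof is correct and matches the paper's route: the paper does not spell out a proof of this lemma but merely cites Thm.~4.5 of \cite{rezk_completion} (originally due to Hofmann and Streicher \cite{hs:gpd-typethy}), and that proof proceeds exactly as you describe---strip the propositional functor axioms using that hom-types of $\D$ are sets, reduce $F=G$ to equalities of the object- and morphism-parts, apply function extensionality and univalence of $\D$ pointwise, identify the residual transport condition on the morphism-parts with naturality, and conclude by path induction that the composite equivalence is $\idtoiso$. The only step you leave implicit is the standard fact that isomorphisms in $[\C,\D]$ are precisely the pointwise isomorphisms, which is needed to recognize the resulting natural transformation as an iso of the functor category.
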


The category of hss contains all the isomorphisms of the category of $(\constfunctor{\Id{}}+H)$-algebras,
for which source and target are substitution systems.
This is sufficient to inherit univalence from the category of algebras:

\begin{lemma}\label{lem:sub_sat}
 Let $\C$ be a univalent category and let $P : \C_0 \to \prop$ and $P_{a,b} : \C(a,b) \to \prop$ define a subcategory $\C_P$ of $\C$.
 Then $\C_P$ is univalent if, for any objects $(a,p_a)$ and $(b,p_b)$ of $\C_P$, and for any \emph{iso}morphism $f : \iso_{\C}(a,b)$ from $a$ to $b$,
  we have $P_{a,b}(f)$.

 In particular, replete subcategories of univalent categories are univalent.
\end{lemma}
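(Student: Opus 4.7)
The plan is to chain together three canonical equivalences and then verify coherence with $\idtoiso_{\C_P}$. Fix objects $(a,p_a)$ and $(b,p_b)$ of $\C_P$; we must show that the canonical map
\[
   \idtoiso_{\C_P} : \bigl((a,p_a)=(b,p_b)\bigr) \longrightarrow \iso_{\C_P}\bigl((a,p_a),(b,p_b)\bigr)
\]
is an equivalence.

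First, since $P$ takes values in $\prop$, the standard characterization of equality in a $\Sigma$-type with propositional second component gives an equivalence $(a,p_a)=(b,p_b) \simeq a=b$, induced by projection to the first component. Second, a morphism in $\C_P$ is a pair of a morphism $f$ in $\C$ together with a proof of $P_{a,b}(f)$, and the latter is a proposition; by the hypothesis, every $\C$-isomorphism $f : \iso_\C(a,b)$ automatically satisfies $P_{a,b}(f)$ (and its inverse satisfies $P_{b,a}$ for the same reason), so the forgetful map
\[
    \iso_{\C_P}\bigl((a,p_a),(b,p_b)\bigr) \longrightarrow \iso_\C(a,b)
\]
is an equivalence. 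Third, univalence of $\C$ yields an equivalence $a=b \simeq \iso_\C(a,b)$ via $\idtoiso_\C$. Composing gives the desired equivalence of types.

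The only remaining step is to check that this composite agrees pointwise with $\idtoiso_{\C_P}$. The cleanest argument is path induction on an inhabitant $q : (a,p_a) = (b,p_b)$: it suffices to consider $q = \refl{(a,p_a)}$, and then each of the three constituent equivalences sends reflexivity to the identity (resp.\ the identity isomorphism), so the composite sends $\refl{(a,p_a)}$ to the identity isomorphism of $(a,p_a)$, which is exactly what $\idtoiso_{\C_P}$ does by definition. I expect this coherence bookkeeping to be the only mildly delicate step; everything else is routine.

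For the \emph{in particular} clause, suppose the subcategory is replete and let $(a,p_a)$, $(b,p_b)$ be objects of $\C_P$ and $f : \iso_\C(a,b)$ any $\C$-isomorphism. Repleteness applied to $f$ and the given $p_a$ directly supplies $P_{a,b}(f)$, which is exactly the hypothesis of the main statement; hence $\C_P$ is univalent.
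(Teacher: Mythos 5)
Your proof is correct and follows essentially the same route as the paper's: the paper's entire argument is the chain of equivalences $(a,p_a)=_{\C_P}(b,p_b)\simeq a=_\C b\simeq \iso_\C(a,b)\simeq \iso_{\C_P}((a,p_a),(b,p_b))$ together with the observation that the composite agrees with $\idtoiso$. Your additional detail on the coherence check (path induction on the equality) and on the replete case is exactly what the paper leaves implicit.
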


\begin{proof}
 For $(a,p_a)$ and $(b,p_b)$ objects of $\C_P$, we have
   \[ (a,p_a) =_{\C_P} (b,p_b) \enspace \simeq \enspace a =_{\C} b \enspace \simeq \enspace \iso_{\C}(a,b) \enspace \simeq \enspace \iso_{\C_P}((a,p_a),(b,p_b)) \]
 and this equivalence, from left to right, is equal to $\idtoiso$.
\end{proof}

This concludes our study of the category of substitution systems associated to a signature.

\section{From substitution systems to monads}\label{sec:monads}

One of the most important results of Matthes and Uustalu's work \cite{DBLP:journals/tcs/MatthesU04}
is the construction of a monad from any substitution system:

 \begin{theorem}[\cite{DBLP:journals/tcs/MatthesU04}, Thm.~10] \label{thm:mon2}
  If an $(\constfunctor{\Id{C}} + H )$-algebra $(T, \alpha)$ forms a heterogeneous
  substitution system for $(H,\th{})$ for some $\th{}$, then $(T, \et{}, \gst{}{\identity_{(T,\et{})}})$ is
  a monad.
 \end{theorem}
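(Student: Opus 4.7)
The plan is to set $\mu \defeq \gst{(T,\eta)}{\identity_{(T,\eta)}} : T \cdot T \to T$ and verify the three monad laws by combining the defining equations of $\gst{}{-}$ (which give existence of $\mu$ and the first monad law for free) with its uniqueness clause (which yields the other two laws), along the strategy already sketched in the introduction. The \emph{left unit law} $\mu \circ \eta \cdot T = \identity_T$ is immediate: the first of the two defining equations of the bracket, namely $f = \gst{(Z,e)}{f} \circ \eta \cdot Z$, instantiated at $f \defeq \identity_{(T,\eta)}$ and hence $Z \defeq T$, reads exactly $\identity_T = \mu \circ \eta \cdot T$.

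For the \emph{right unit law} $\mu \circ T \cdot \eta = \lambda_T$, we invoke uniqueness applied to the pointed morphism $\eta : (\Id{\C}, \identity) \to (T, \eta)$ in $\Ptd(\C)$; its bracket $\gst{(\Id{\C}, \identity)}{\eta} : T \cdot \Id{\C} \to T$ is uniquely characterised by the defining diagram, so it suffices to show that both $\lambda_T$ and $\mu \circ T \cdot \eta$ satisfy this characterisation. For $\lambda_T$, the two defining equations reduce to naturality together with the first coherence equation of $\theta$ in Definition~\ref{def:signature}. For $\mu \circ T \cdot \eta$, the first equation follows from naturality of $\eta$ combined with the already-established left unit law, and the second from naturality of $\theta$ in its second argument combined with the defining equation of $\mu$ itself. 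Uniqueness then forces the two to coincide. For \emph{associativity} $\mu \circ T \cdot \mu = \mu \circ \mu \cdot T \circ \alpha_{T,T,T}$ we apply the same technique to the pointed morphism $\mu : (T \cdot T, \eta \cdot T \circ \eta) \to (T, \eta)$ in $\Ptd(\C)$ (which is a pointed morphism precisely by the left unit law just established); its bracket $\gst{(T \cdot T, \eta \cdot T \circ \eta)}{\mu} : T \cdot (T \cdot T) \to T$ is unique, and one exhibits both $\mu \circ T \cdot \mu$ and $\mu \circ \mu \cdot T \circ \alpha_{T,T,T}$ as solutions.

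The main obstacle is the verification for the associativity law, in particular for the candidate $\mu \circ \mu \cdot T \circ \alpha_{T,T,T}$: showing that the second defining equation of $\gst{(T \cdot T, \eta \cdot T \circ \eta)}{\mu}$ is satisfied requires careful use of the second coherence axiom of $\theta$ in Definition~\ref{def:signature} (the one involving $\alpha$, which expresses $\th{X,(Z' \cdot Z, e' \cdot e)}$ in terms of $\th{X \cdot Z',(Z,e)}$ composed with $\th{X,(Z',e')} \cdot Z$), together with two applications of the defining equation of $\mu$. Aside from this, each verification is a routine diagram chase combining naturality of $\theta$, the unit laws, and the defining equation of $\mu$. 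Throughout, because our monoidal structures on $[\C,\C]$ and $\Ptd(\C)$ are non-strict, the coherence isomorphisms $\alpha$, $\lambda$, $\rho$ have to be tracked explicitly at every step, which is the main source of extra bookkeeping compared with the strict treatment implicitly adopted in \cite{DBLP:journals/tcs/MatthesU04}.
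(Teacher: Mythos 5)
Your proposal is correct and follows essentially the same route as the proof of Theorem~10 in \cite{DBLP:journals/tcs/MatthesU04} that this paper imports (and sketches in its introduction): $\mu$ and the left unit law come from the existence clause of the bracket at $f=\identity_{(T,\et{})}$, while the right unit and associativity laws come from the uniqueness clause applied to the pointed morphisms $\eta$ and $\mu$, respectively, with the two coherence axioms of $\th{}$ entering exactly where you say they do. The only caveat is cosmetic: the unit laws should be stated against the unitors $\rho_T$ and $\lambda_T$ rather than $\identity_T$, which you already acknowledge in your closing remark on non-strictness.
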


See Section~\ref{sec:formalization} for some comments on technical challenges we had to overcome for the formalization of its proof.

It is natural to ask whether this map extends to \emph{morphisms}, and indeed it does:

\begin{theorem}\label{thm:functor_hss_mon}
  The map from heterogeneous substitution systems to monads defined in \cite[Thm.\ 10]{DBLP:journals/tcs/MatthesU04}
  is the object map of a functor $\HSS(H,\theta) \to \Mon(\C)$.
\end{theorem}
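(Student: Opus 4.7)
The plan is to define the action on morphisms tautologically: an hss morphism $\beta : (T,\eta,\tau,\gst{}{-}) \to (T',\eta',\tau',\gst{}{-}')$ will be sent to $\beta$ itself, viewed as a natural transformation $T \to T'$ between the underlying functors of the associated monads. Functoriality---preservation of identities and of composition---is then automatic, since the underlying data is untouched; the same choice incidentally makes the functor faithful. The work reduces to verifying the two monad morphism laws for $\beta$.

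The unit law $\beta \circ \eta = \eta'$ is precisely the first diagram of \Cref{def:cat_of_hss}. For the multiplication law
\[
 \beta \circ \mu \enspace = \enspace \mu' \circ (\beta \hct \beta) \enspace ,
\]
where $\mu = \gst{}{\identity_{(T,\eta)}}$, $\mu' = \gst{}{\identity_{(T',\eta')}}'$, and $\beta \hct \beta = T' \cdot \beta \circ \beta \cdot T$ is the horizontal composite, I would chain three ingredients. First, instantiating the bracket-compatibility (third) diagram of \Cref{def:cat_of_hss} at $f = \identity_{(T,\eta)}$---observing that the composite $\beta \circ \identity_{(T,\eta)}$ in $\Ptd(\C)$ is just $\beta$ itself, regarded as a pointed morphism $(T,\eta) \to (T',\eta')$---yields
\[
 \beta \circ \mu \enspace = \enspace \gst{}{\beta}' \circ \beta \cdot T \enspace .
\]
Second, applying the naturality of the bracket $\gst{}{-}'$ in its pointed-morphism argument (the lemma stated just after \Cref{def:subst-systems}) to the decomposition $\beta = \identity_{(T',\eta')} \circ \beta$ rewrites $\gst{}{\beta}'$ as $\mu' \circ T' \cdot \beta$. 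Composing these two identities gives exactly the required equation.

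The main obstacle, such as it is, is the bookkeeping of whiskerings and composites across the three ambient categories $[\C,\C]$, $\Ptd(\C)$ and $\Mon(\C)$---in particular, keeping straight that bracket naturality quantifies over pointed morphisms while the monad multiplication lives in the endofunctor category. All of the structure required is already in place: the bracket-compatibility axiom of \Cref{def:cat_of_hss} is tailored precisely so that, once combined with bracket naturality, it reconstructs the monad multiplication of $T'$ post-composed with the horizontal composite $\beta \hct \beta$, so no further creative input is needed beyond the routine verification above.
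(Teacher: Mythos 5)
Your proposal is correct and takes essentially the same route as the paper: the paper also sends an hss morphism to its underlying natural transformation, notes that only compatibility with the multiplications $\mu = \gst{}{\identity_{(T,\eta)}}$ and $\mu'$ needs checking, and dismisses this as an easy consequence of the bracket-compatibility axiom. Your two-step derivation---instantiating the third diagram of \Cref{def:cat_of_hss} at $f = \identity_{(T,\eta)}$ and then using naturality of $\gst{}{-}'$ in the pointed-morphism argument to identify $\gst{}{\beta}'$ with $\mu' \circ T'\cdot\beta$---is precisely the calculation the paper leaves implicit.
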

\begin{proof}
  Given any morphism $\beta : (T,\eta,\tau,\gst{}{}) \to (T', \eta', \tau', \gst{}{}')$ of hss,
  the underlying natural transformation $\beta : T \to T'$
  needs to be proven compatible with the multiplications $\mu^T := \gst{}{\identity_{(T,\et{})}}$
  and $\mu^{T'}$
  of the monadic structures on $T$ and $T'$ defined in \cite[Thm.\ 10]{DBLP:journals/tcs/MatthesU04}.
  This is an easy consequence of the compatibility of $\beta$ with $\gst{}{}$ and $\gst{}{}'$.
\end{proof}

The functor from substitution systems to monads is faithful, but not full.
Intuitively, the lack of fullness stems from the fact that the axioms of a monad morphism
do not specify compatibility of the mapping with the \enquote{inner nodes} of an expression, but only
at the leaves, that is, in the case of a variable.

\begin{lemma}
 The functor of \Cref{thm:functor_hss_mon} is faithful.
\end{lemma}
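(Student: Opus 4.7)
The plan is to observe that a morphism of heterogeneous substitution systems, as defined in \Cref{def:cat_of_hss}, is essentially a bare natural transformation $\beta : T \to T'$ equipped with proofs that three diagrams commute. The underlying functor of \Cref{thm:functor_hss_mon} sends such a morphism to the monad morphism whose underlying natural transformation is again $\beta$ (together with further compatibility proofs). Thus the \emph{underlying natural transformation} is preserved by the functor, and faithfulness will reduce to the fact that an hss morphism is determined by its underlying natural transformation.

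More precisely, I would proceed as follows. First I would observe that, in the base category $[\C,\C]$ of endofunctors on $\C$, the hom-types are sets (as natural transformations between two functors into a category with homsets form a set), so the three commutativity conditions in \Cref{def:cat_of_hss} are propositions; in particular, the condition of being an $(\constfunctor{\Id{}}+H)$-algebra morphism is a proposition, and, by \Cref{rem:bracket_mor_prop}, so is the condition of being compatible with the brackets on either side. Therefore equality of two hss morphisms $\beta_1, \beta_2 : (T,\eta,\tau,\gst{}{}) \to (T', \eta', \tau', \gst{}{}')$ is equivalent to equality of their underlying natural transformations $T \to T'$.

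Next I would unfold the construction of the functor from \Cref{thm:functor_hss_mon}: given an hss morphism $\beta$, its image as a monad morphism has $\beta$ itself as underlying natural transformation, with the compatibility conditions with $\eta^{T}, \eta^{T'}$ and $\mu^{T}, \mu^{T'}$ being derived from the compatibility of $\beta$ with $\eta, \eta'$ and with $\gst{}{}, \gst{}{}'$. Monad morphisms likewise are determined by their underlying natural transformation, the remaining data being propositional. Hence if the images of $\beta_1$ and $\beta_2$ are equal as monad morphisms, their underlying natural transformations coincide, and by the first step $\beta_1 = \beta_2$ in $\HSS(H,\theta)$.

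I do not expect any genuine obstacle: the proof is essentially a bookkeeping argument about propositional truncation of the extra structure, combined with the fact that the functor acts as the identity on the underlying natural transformation. The only mild subtlety is to spell out carefully that being a morphism of pointed endofunctors (used in the last diagram of \Cref{def:cat_of_hss} to form $\beta \circ f$) and being a monad morphism are both propositional extensions of an underlying natural transformation, so that equality in both the source and the target category reduces to equality of underlying natural transformations.
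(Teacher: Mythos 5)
Your proposal is correct and follows essentially the same route as the paper's (one-line) proof: the functor is the identity on underlying natural transformations, and both hss morphisms and monad morphisms are determined by their underlying natural transformation since the remaining data is propositional. Your additional remarks on why the conditions are propositions just make explicit what the paper leaves implicit.
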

\begin{proof}
  Two parallel monad morphisms are equal if their underlying natural transformations are,
  and the analogous statement is true for morphisms of substitution systems.
\end{proof}

\begin{remark}
 The functor of \Cref{thm:functor_hss_mon} is not full.
 For instance, choose $\C=\Set$, and take a signature with two copies $\app{}$ and $\app{}'$ (of the same arity) of an
 \enquote{application} constructor, see Definition \ref{def:arity_app} in Section~\ref{sec:explicit_flattening}.
 Take the initial substitution system associated to that signature
 (as constructed via Theorems~\ref{lem:mu15} and \ref{thm:lift_initial_is_initial} in Section~\ref{sec:initiality}), and define
 an endomorphism on it that maps $\app{}$ to $\app{}'$ recursively, and is the identity on the other constructors.
 This yields
 a monad morphism, but not a morphism of substitution systems;
 indeed, the second diagram of Def.~\ref{def:cat_of_hss} does not commute---any endomorphism on that substitution system
 must be the identity morphism.
\end{remark}

\section{Lifting initiality through a fusion law}\label{sec:initiality}

The starting point of this section is a result from
\cite{DBLP:journals/tcs/MatthesU04}, which gives one way to define
substitution systems and which comes from a very specific instance of
Lemma~\ref{lem:gen_mendler_it}. As a first instantiation step, take in
that lemma $[\C,\C]$ for $\C$ and $\D$ and the \emph{reduction
  functor} ${-} \cdot Z$ for $L$, for any endofunctor $Z$ of
$\C$. This is the general situation of the ``gfolds'' of Bird and
Paterson \cite{DBLP:journals/fac/BirdP99}, and (the carriers of) the
corresponding initial $F$-algebras are called ``nested datatypes''
\cite{birdmeertens}. As Bird and Paterson recall, the assumption of
having a right adjoint to the reduction functor means that right Kan
extensions along those $Z$ exist. In the context of functional
programming with impredicative polymorphism, these right Kan
extensions even exist in a computational way (although the full
categorical properties of Kan extensions are not reflected
computationally) \cite{grossestcspaper}. We will not further develop
the categorical semantics of those programming languages. The previous
remarks should make it plausible that the following theorem rests on
``reasonable'' technical conditions. If program verification is aimed
at in an intensional setting, replacements for the categorical notions
have to be found, and yet different schemes of generalized iteration
have to be studied in order to combine expressivity, termination
guarantees and program verification in the same framework
\cite{RalphSCP} (using Coq very differently from the \UniMath
approach).

\begin{theorem}[\cite{DBLP:journals/tcs/MatthesU04}, Thm.~15]\label{lem:mu15}
  Let $(H,\theta)$ be a signature.
  If $[\C,\C]$ has an initial $(\constfunctor{\Id{\C}} + H)$-algebra and a right
  adjoint for the functor ${-} \cdot Z : [\C,\C] \arr [\C,\C]$ exists for
  every $\Ptd(\C)$-object $(Z,e)$, then $(T, \al{})$ defined by
\[
(T, \al{}) = (\mu (\constfunctor{\Id{\C}} + H), \In{\constfunctor{\Id{\C}} + H})
\]
is a heterogeneous substitution system for $(H,\th{})$.
\end{theorem}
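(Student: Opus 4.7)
The plan is to apply Lemma~\ref{lem:gen_mendler_it} with $F := \constfunctor{\Id{\C}} + H$ on $[\C,\C]$, with $L := {-} \cdot Z$ (for a given but arbitrary pointed endofunctor $(Z,e)$), which has a right adjoint by hypothesis, and with $X := T = \mu F$. Given a pointed morphism $f : (Z,e) \to (T,\et{})$, the bracket $\gst{(Z,e)}{f}$ demanded by Definition~\ref{def:subst-systems} will be extracted as the unique morphism produced by the lemma, and uniqueness in Definition~\ref{def:subst-systems} will likewise be inherited from the uniqueness clause of the lemma.

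The real content is the construction of a natural transformation $\Psi$ of the type $[\C,\C]({-} \cdot Z,\,T) \to [\C,\C](F({-}) \cdot Z,\,T)$ required by Lemma~\ref{lem:gen_mendler_it}. Given $g : X' \cdot Z \to T$, I would exploit that $(\Id{\C} + HX') \cdot Z$ is pointwise the coproduct $Z + HX' \cdot Z$ to set
\[ \Psi_{X'}(g) \;:=\; [\,f,\; \ta{} \comp Hg \comp \th{X',(Z,e)}\,], \]
where the cotupling uses the coproduct in $[\C,\C]$ and $f$ stands for its underlying natural transformation. Naturality of $\Psi$ in $X'$ reduces, after cutting through the coproduct on both sides, to naturality of $\th{}$ in its first argument together with functoriality of $H$.

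Feeding this $\Psi$ into Lemma~\ref{lem:gen_mendler_it} then yields a unique $h : T \cdot Z \to T$ satisfying $h \comp \al{} \cdot Z = \Psi_T(h)$, where $\al{} = \In{\constfunctor{\Id{\C}} + H}$. Since $\al{} = [\et{}, \ta{}]$ by Equation~(\ref{hss_eta_tau}), precomposing both sides with $\inl \cdot Z$ and $\inr \cdot Z$ respectively, and using the universal property of the coproduct, decomposes this single Mendler-style equation into exactly the two equations required by Definition~\ref{def:subst-systems}, namely $h \comp \et{} \cdot Z = f$ and $h \comp \ta{} \cdot Z = \ta{} \comp Hh \comp \th{T,(Z,e)}$. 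Conversely, any $h'$ verifying both equations reassembles via cotupling into a solution of the Mendler-style equation and must therefore coincide with $h$. This gives existence and uniqueness of the bracket, and quantifying over $(Z,e)$ and $f$ completes the verification that $(T,\al{})$ is a heterogeneous substitution system.

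The main obstacle, in my view, is bookkeeping rather than conceptual depth: in the absence of a strict monoidal structure on $[\C,\C]$ and $\Ptd(\C)$, as noted after Definition~\ref{def:monoidal_endofunctors}, one must thread coherence isomorphisms through the identifications of functors such as $(\Id{\C} + HX') \cdot Z$ with $Z + HX' \cdot Z$, and verify that the two axioms imposed on $\th{}$ in Definition~\ref{def:signature} are precisely what is needed so that the naturality of $\Psi$ and the decomposition of the Mendler equation go through. On paper this overhead is almost invisible, but it is the sort of administrative detail that becomes substantial in a formal development.
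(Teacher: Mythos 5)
Your proposal is correct and follows the same route as the paper: you instantiate Lemma~\ref{lem:gen_mendler_it} with $F:=\constfunctor{\Id{\C}}+H$ and $L:={-}\cdot Z$, and your $\Psi_{X'}(g)=[\,f,\;\ta{}\comp Hg\comp \th{X',(Z,e)}\,]$ is exactly the paper's $\Psi_f(X)(h):=\phi\circ F'h\circ\theta'_X$ with $F':=\constfunctor{Z}+H$, $\theta'_X:=\idwt{Z}+\theta_{X,(Z,e)}$ and $\phi:=[f,\ta{}]$, just written out as a cotuple rather than factored through the ``special case of generalized iteration'' pattern. Your remarks on decomposing the single Mendler equation along the coproduct and on the coherence-isomorphism bookkeeping match the paper's account as well.
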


The proof of this theorem is by identifying, for a given $f : (Z,e)
\arr (T,\et{})$, the morphism $\gst{(Z,e)}{f}$ as an instance of
Lemma~\ref{lem:gen_mendler_it}, both for the existence and uniqueness
property. The obvious part of the instantiation is the choice of
parameters mentioned above, and by setting $F:=\constfunctor{\Id{\C}}
+ H$. The essential ingredient for getting a morphism $\gst{(Z,e)}{f}$
of type $\mu F\cdot Z\arr T$ (here, $T$ is even $\mu F$) is a natural
transformation $\Psi_f$ whose typing could sloppily be written as
$$\Psi_f::\forall X:[\C,\C].\,(X\cdot Z\to T)\to (FX\cdot Z\to T)\enspace.$$
The type of $\Psi_f$ suggests
the following problem-solving method: The original problem is that of
finding a morphism of type $\mu F\cdot Z\arr T$. We abstract away from
$\mu F$ and replace it by an arbitrary endofunctor $X:[\C,\C]$. For
this arbitrary $X$, we have to extend a purported solution for
parameter $X$, hence of type $X\cdot Z\to T$, to a solution for
parameter $FX$, hence of type $FX\cdot Z\to T$. Of course, this has to
be done naturally in $X$, as required in
Lemma~\ref{lem:gen_mendler_it}. So, passing naturally from $X$ to $FX$
as parameter, the lemma even yields a (unique) solution for the least
fixed-point of $F$ as parameter. The continuity properties behind this
method already for (co-)inductive types have been deeply explored by
Abel \cite{DBLP:journals/ita/Abel04} and extended to nested dataypes
later \cite{AbelThesis}.

This is the essence of schemes in Mendler's style
\cite{DBLP:journals/apal/Mendler91}: passing from a solution in
parameter $X$ to a solution in parameter $FX$ \emph{uniformly} (in
Mendler's original work, this was plainly universal quantification
over a type variable $X$, in the categorical setting, this is achieved
by naturality), one is guaranteed a solution in parameter $\mu
F$. Lemma~\ref{lem:gen_mendler_it} is an instance of that idea, hence
the name generalized iteration in Mendler-style.

Mendler-style gives great liberty: were are free in choosing $\Psi_f$
of the required type (implicitly asking for naturality), but there is
little guidance in finding the right one for our purpose. Guidance
would, e.\,g., come from asking for an algebra structure on the target
endomorphism $T$.  Therefore, we instantiate the lemma further to
obtain what is called ``a special case of generalized iteration'' by
Matthes and Uustalu \cite{DBLP:journals/tcs/MatthesU04}.\footnote{The
  instantiation with ${-} \cdot Z$ for $L$ can also be formulated in a
  less homogeneous setting where not only endofunctor categories
  intervene \cite[Section 2.3]{DBLP:journals/tcs/MatthesU04}.}
It consists in requiring an endofunctor $F'$ on $[\C,\C]$, a natural transformation $\theta':(F-)\cdot Z\arr F'(-\cdot Z)$ and an $F'$-algebra $\phi:F'T\arr T$ on $T$, and in putting them together to obtain
$$\Psi_f(X)(h:X\cdot Z\arr T):=\phi\circ F'h\circ \theta'_X:FX\cdot Z\arr T\enspace.$$
Its use in our present situation is then with
$F':=\constfunctor{Z}+H$, $\theta'_X:=\idwt{Z}+\theta_{X,(Z,e)}$ and
$\phi:=[f,\tau]$, using the datum $\theta$ of the signature and the
$H$-algebra $\tau$ that is generically derived from $\alpha$ (see
before Definition~\ref{def:subst-systems}).

\smallskip We remark that all of this is not optimal from a
progammer's point of view (the question is then not only of soundness
but of efficiency of the traversals through the data structures) and
that there is the more refined notion of ``generalized Mendler
iteration'' \cite{grossestcspaper} (called $\mathsf{GMIt^\omega}$) as
an efficient way out. The crucial idea is to generalize the problem
further than finding a solution of $X\cdot Z\arr T$ for parameter
$X=\mu F$. An $h:X\cdot Z\arr T$ consists of morphisms $h_A:X(ZA)\arr
TA$ for every $A:\C_0$, and generalized Mendler iteration asks even
for operations $h_f:XB\arr TA$ for any $B:\C_0$ and $f:B\arr
ZA$. Taking for $f$ the identity morphism on $ZA$, one gets the
desired components of the solution in the end. The gain in efficiency
comes from the combination of a fold and a map in this
scheme---enforced just by these types in the polymorphic formulation
of \cite{grossestcspaper}.

Also for generalized Mendler iteration, there is a
formulation in more conventional terms of algebras, called ``generalized
refined conventional iteration'' \cite{grossestcspaper}, which captures
in particular the efficient folds of Martin, Gibbons and Bayley
\cite{gibbons:efolds}. For generalized Mendler iteration, there is
also a means of verification in usual intensional \Coq, using category
theory only as a motivation and not as the mathematical framework
\cite{RalphSCP}.  
\medskip

We augment the previous theorem by showing that the constructed substitution system is initial:

\begin{theorem}\label{thm:lift_initial_is_initial}
 The substitution system $(T,\alpha,\gst{}{})$ constructed in Lemma~\ref{lem:mu15} is initial in $\HSS(H,\theta)$.
\end{theorem}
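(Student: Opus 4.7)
The plan is to leverage the initiality of $T$ as an $F$-algebra (where $F := \constfunctor{\Id{\C}} + H$) to produce the underlying morphism, and then show that the bracket-compatibility condition comes \enquote{for free} via the fusion law of Lemma~\ref{fusion_law}. Let $(T',\alpha',\gst{}{-}')$ be another substitution system, with the associated $\eta',\tau'$ derived from $\alpha'$ as in~\eqref{hss_eta_tau}. Since $(T,\alpha) = (\mu F, \In{F})$ is the initial $F$-algebra, there is a unique algebra morphism $\beta \colon T \to T'$. This immediately yields the commutativity of the first two squares of Definition~\ref{def:cat_of_hss}. Moreover, by Remark~\ref{rem:bracket_mor_prop} bracket-compatibility is a proposition, so uniqueness of $\beta$ as an hss morphism is inherited from uniqueness as an algebra morphism. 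The nontrivial content is therefore existence, i.e.\ proving that $\beta$ satisfies the third square
\[
    \beta \circ \gst{(Z,e)}{f} \;=\; \gst{(Z,e)}{\beta \circ f}' \circ \beta \cdot Z
\]
for every pointed morphism $f \colon (Z,e) \to (T,\eta)$.

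Recall from the discussion following Theorem~\ref{lem:mu15} that $\gst{(Z,e)}{f} = \giter{F}{{-}\cdot Z}{}{\Psi_f}$ for
\[
    \Psi_f(X)(h) \;=\; [f,\tau] \circ (\idwt{Z} + Hh) \circ (\idwt{Z} + \theta_{X,(Z,e)}) \enspace ,
\]
and analogously $\gst{(Z,e)}{\beta\circ f}'$ is the corresponding generalized iteration computed in $T'$ with $\Psi'_{\beta\circ f}$. The plan is to apply Lemma~\ref{fusion_law} with $L=L'={-}\cdot Z$, $X = T$, $X' = T'$, and natural transformation $\Phi_X(h) := \beta \circ h \colon \D(X \cdot Z, T) \to \D(X\cdot Z, T')$. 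The fusion condition at $\mu F$ reduces to the identity
\[
    \beta \circ [f,\tau] \circ (\idwt{Z} + Hh) \;=\; [\beta \circ f, \tau'] \circ (\idwt{Z} + H(\beta \circ h))
\]
which follows from the universal property of the coproduct together with the algebra-morphism equation $\beta \circ \tau = \tau' \circ H\beta$. Hence fusion yields
\[
    \beta \circ \gst{(Z,e)}{f} \;=\; \giter{F}{{-}\cdot Z}{}{\Psi'_{\beta\circ f}} \enspace .
\]

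To conclude, it remains to identify the right-hand side with $\gst{(Z,e)}{\beta\circ f}' \circ \beta \cdot Z$. By the uniqueness clause of Lemma~\ref{lem:gen_mendler_it}, it suffices to verify that $k := \gst{(Z,e)}{\beta\circ f}' \circ \beta \cdot Z$ satisfies the defining equation $k \circ \In{F} \cdot Z = \Psi'_{\beta\circ f,\,T}(k)$. Expanding $\In{F} \cdot Z = \alpha \cdot Z$, applying the algebra-morphism property $\beta \circ \alpha = \alpha' \circ F\beta$ to push $\beta$ across, using the defining equation of $\gst{(Z,e)}{\beta\circ f}'$ inside $T'$, and finally applying naturality of $\theta$ in its first argument to rewrite $\theta_{T',(Z,e)} \circ H\beta \cdot Z = H(\beta \cdot Z) \circ \theta_{T,(Z,e)}$, one obtains exactly $\Psi'_{\beta\circ f,\,T}(k)$. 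Thus $k = \beta \circ \gst{(Z,e)}{f}$, establishing bracket-compatibility.

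The main obstacle lies in this last verification: the bookkeeping needed to see that the pre-composition by $\beta \cdot Z$ on $\gst{(Z,e)}{\beta\circ f}'$ re-assembles, via the algebra-morphism equation and naturality of $\theta$, into the Mendler-style iteration equation for $\Psi'_{\beta\circ f}$ over $\mu F$. Once this is in place, initiality of $\beta$ follows without further work.
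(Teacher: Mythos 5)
Your proposal is correct and follows essentially the same route as the paper: reduce to showing the initial algebra morphism is bracket-compatible, apply the fusion law (Lemma~\ref{fusion_law}) with $\Phi$ given by post-composition with $\beta$ to identify $\beta \circ \gst{}{f}$ with the iterator $\giter{F}{{-}\cdot Z}{}{\Psi'_{\beta\circ f}}$, and then verify that $\gst{}{\beta\circ f}'\circ\beta\cdot Z$ satisfies the defining equation of that same iterator. The details you supply (the coproduct/algebra-morphism computation for the fusion premiss, and the naturality of $\theta$ in its first argument for the final verification) fill in exactly the steps the paper leaves as routine.
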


In order to prove Theorem~\ref{thm:lift_initial_is_initial}, it suffices to show that,
for any given substitution system $(T',\alpha', \gst{}{}')$,
the initial morphism \fat{of algebras}
\[! : (T,\alpha) \to (T',\alpha')\]
is compatible with the  operations $\gst{}{}$ (defined in the proof of Lemma~\ref{lem:mu15}) and $\gst{}{}'$.
That is, we need to show that, for any $f : (Z,e)\arr (T,\et{})$,
\begin{equation}
    !\circ \gst{}{f} = \gst{}{!\circ f}'\circ (! \cdot Z) \enspace . \label{eq:init_compat}
\end{equation}
Using the fusion law (Lemma~\ref{fusion_law}), we show that both sides
of \eqref{eq:init_compat} are equal to the application of an
iterator. More precisely, we use the fusion law for the left-hand
side, knowing the explicit definition of $\gst{}{f}$ as an iterator,
described above, to establish equality with $\giter{F}{{-} \cdot
  Z}{}{\Psi_f}$, where we define
$$\Psi_f(X)(h:X\cdot Z\arr T'):=[!\circ f,\tau'\circ Hh\circ \theta_{X,(Z,e)}]:FX\cdot Z\arr T'\enspace.$$
Once the premisses of the fusion law established, we can show equality
with the right-hand side of \eqref{eq:init_compat} by verifying that
the defining equations of $\giter{F}{{-} \cdot Z}{}{\Psi_f}$ are
fulfilled by the right-hand side.

\section{A worked example: flattening of explicit substitution}\label{sec:explicit_flattening}
In practice, a signature is often a family of arities, each arity
specifying the type of one term constructor.  A typical example is a
typeful version of de Bruijn indices for pure (untyped)
$\lambda$-calculus, where, intuitively, the equation
$$TA=A+TA\times TA+T(1+A)$$
has to be solved, giving in $TA$ the set of $\lambda$-terms having free
variables among $A$ (cf.~the introduction), where the last summand
represents $\lambda$-abstraction that abstracts the variable
corresponding to the extra element of $1+A$. This example is developed
in \cite{DBLP:journals/tcs/MatthesU04} but originates in
\cite{alticsl99,birdpatersonJFP}.

In our formalism (that of \cite{DBLP:journals/tcs/MatthesU04}), we do
not need to distinguish between arities and signatures.  Intuitively,
an arity is a signature that is not obtained as a proper sum of two
other signatures.  In particular, a single arity constitutes a
signature, and we can \enquote{glue} signatures together to obtain a
new signature:

\begin{lemma}[Sum of signatures]\label{lem:sum_of_signatures}
  Let $(H,\theta)$ and $(H',\theta')$ be two signatures. Then $(H+H',\theta+\theta')$ is a signature.
\end{lemma}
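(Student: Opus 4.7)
The plan is to build the sum signature componentwise, using the coproducts in $[\C,\C]$ that are inherited pointwise from those in $\C$ (which we have assumed from the convention preceding Definition~\ref{def:signature}). First I would define the underlying rank-2 functor to be the pointwise coproduct $H+H'$ in $[\C,\C]$, that is, $(H+H')(X) := HX + H'X$ with action on morphisms given by the functoriality of coproducts. The unit of this coproduct provides natural injections $\iota : H \to H+H'$ and $\iota' : H' \to H+H'$ in the functor category $[[\C,\C],[\C,\C]]$.

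Next I would define the strength $\theta + \theta'$. For an object $(X,(Z,e))$ of $[\C,\C]\times\Ptd(\C)$, the source of $(\theta+\theta')_{X,(Z,e)}$ is $((H+H')X)\cdot Z = (HX + H'X)\cdot Z$, and since precomposition with $Z$ is a left adjoint (or, more concretely, by direct pointwise verification using coproducts in $\C$), this is canonically isomorphic to $(HX)\cdot Z + (H'X)\cdot Z$. Correspondingly, the target $(H+H')(X\cdot Z) = H(X\cdot Z) + H'(X\cdot Z)$ is the coproduct of the targets of $\theta_{X,(Z,e)}$ and $\theta'_{X,(Z,e)}$. I would then set
\[
    (\theta+\theta')_{X,(Z,e)} \;:=\; \bigl[\,\inl\circ\theta_{X,(Z,e)}\,,\,\inr\circ\theta'_{X,(Z,e)}\,\bigr]
\]
after pre-composing with the canonical iso $(HX+H'X)\cdot Z \simeq (HX)\cdot Z + (H'X)\cdot Z$. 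Naturality in both arguments follows from naturality of $\theta$ and $\theta'$ together with the universal property of coproducts (a map out of a coproduct is determined by its restrictions to the summands).

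It remains to check the two coherence axioms from Definition~\ref{def:signature}. For each axiom, both sides are morphisms out of a coproduct, so by the universal property it suffices to check equality after precomposition with $\inl$ and with $\inr$; on each summand the equation reduces, after transport along the canonical isomorphisms relating $(-+-)\cdot Z$ with $(-\cdot Z)+(-\cdot Z)$, to the corresponding axiom for $\theta$ and $\theta'$ respectively. The first axiom (concerning $\idwt{\Ptd(\C)}$) is straightforward. The main obstacle, and the only genuinely tedious step, is the second axiom: one must carefully chase the associator $\alpha_{-,Z',Z}$ and the coproduct injections through the composite on the right-hand side, verifying that the chosen isomorphisms $(- + -)\cdot Z \simeq (-\cdot Z) + (-\cdot Z)$ commute appropriately with $\alpha$. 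Since these isomorphisms are built from identity morphisms pointwise (as noted in the remark after Definition~\ref{def:monoidal_endofunctors}), the verification ultimately reduces to the two given coherence conditions for $\theta$ and $\theta'$ and the naturality of the coproduct structure.
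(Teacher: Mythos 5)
Your plan is correct and matches the approach the paper takes (the paper gives no written proof of this lemma, deferring it to the file \texttt{SumOfSignatures.v}, which builds the sum signature componentwise from pointwise coproducts exactly as you describe). Your one refinement worth keeping in mind is that with the pointwise construction of coproducts in the functor category, the ``canonical iso'' $(HX+H'X)\cdot Z \simeq (HX)\cdot Z + (H'X)\cdot Z$ is in fact a definitional equality, so the transport step you flag as the main obstacle largely disappears.
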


This lemma is important for our main example: indeed, we consider two
signatures, where one is obtained from the other by extending the
language (better: its signature) by one additional term constructor
(better: arity).

To this end, we need the base category $\C$ to come equipped with some
extra structure: for the remainder of this section, we assume $\C$ to
have (specified) products, coproducts and a terminal object.  An
example of such a category is the (univalent) category $\Set$ of sets
(see Section~\ref{sec:univalent_mathematics}), which has all limits
and colimits.

We continue the case study in \cite{DBLP:journals/tcs/MatthesU04} on
$\lambda$-calculus without and with a form of explicit
substitution---``explicit flattening''.  In order to do so, we first
present the functors $H$ and natural transformations $\theta$
corresponding to the arities of application, abstraction, and explicit
flattening, respectively:

\begin{definition}[application]\label{def:arity_app}
  The signature of application is given by pointwise product, inherited from the
  base category $\C$:
  \[H^{\App}(T) := T \times T \enspace . \]
  The natural transformation $\theta^{\App}$ is given pointwise by the identity,
    \[\th{X,(Z,e)}^{\App} : (X \times X) \cdot Z \to (X\cdot Z)\times (X\cdot Z) \enspace . \]
\end{definition}
The fact that the identity suffices here corresponds to the triviality
of first-order operations in substitution (which is plainly homomorphic on those operations).

\begin{definition}[abstraction]\label{def:arity_abs}
  Abstraction in our context is defined by precomposition with
  a coproduct, corresponding to \enquote{context extension}:
 \[H^{\Abs}(T) := T \cdot \option \enspace ,  \]
 where $\option(X) := 1 + X$ represents the context $X$ extended by one distinguished element $\inl_{1,X}(\star)$.
 The \enquote{strength} $\th{}$ is defined as
  \[\th{X,(Z,e)}^{\Abs}(A) := X\copair{e_{1+A} \comp
     \inl_{1,A}}{Z \inr_{1,A}}:X(1+ZA)\arr X(Z(1+A)) \enspace . \]
\end{definition}
The defined strength embodies the usual lifting needed for
substitution in de Bruijn representations of $\lambda$-abstraction.

\begin{definition}[explicit flattening]\label{def:arity_flat}
  The flattening signature is defined by selfcomposition,
  \[H^{\Flatten}(T) := T \cdot T \enspace , \]
  and the corresponding strength requires the unit $e$ of the pointed endofunctor $(Z,e)$
  to be inserted in the right place:
  \[\th{X,(Z,e)}^{\Flatten}:=X\cdot e\cdot X\cdot Z:X\cdot X\cdot Z\arr X\cdot Z\cdot X\cdot Z \enspace . \]
\end{definition}
Note that the flattening signature cannot be dealt with in a framework
with a fixed enumeration of variable names and shows, already on the
syntactic side, the most simple case of ``true nesting'' in nested
datatypes (see, e.\,g., \cite{grossestcspaper}). Notice that the
highly parameterized type already suggests the right definition. For
its mainly used instance $\th{T,(T,\eta)}^{\Flatten}$, with $T$ and
$\eta$ components of the obtained substitution system, its type
$T^3\arr T^4$ hardly suggests a canonical definition.

\smallskip
These signatures are now combined, as per Lemma~\ref{lem:sum_of_signatures}, to obtain
the signatures we are mainly interested in:

\begin{definition}[$\lambda$-calculus]\label{def:signature_lc}
  The signature $\lc$ is obtained as the sum of the signatures of
  Defs.~\ref{def:arity_app} and \ref{def:arity_abs}.
\end{definition}

\begin{definition}[$\lambda$-calculus with explicit flattening]\label{def:signature_lcflat}
  The signature $\lcmu$ is obtained as the sum of the signatures of
  Defs.~\ref{def:signature_lc} and \ref{def:arity_flat}.
\end{definition}

For the purpose of this example, we assume the signatures $\lc$ and
$\lcmu$ to have initial substitution systems.  By Lemma \ref{lem:mu15}
we get those if we assume that their underlying initial algebras
exist.  (For a remark on the construction of initial algebras, see
Section \ref{sec:conclusions}.)  We denote the initial substitution
systems by $(\Lam,\alpha,\gst{}{})$ and
$(\Lammu,\alpha^\mu,\gst{}{}^{\mu})$, respectively. Intuitively, they
solve the equation in $T$ given in the first paragraph of this
section, and the following in $T'$, respectively:
$$T'A=A+T'A\times T'A+T'(\option\,A)+T'(T'A)\enspace.$$

Why is $\Lammu$ supposed to represent $\lambda$-calculus with explicit
flattening?  Coming back to parallel substitution on $T$ ($=\Lam$), as
mentioned in the introduction, we may study the substitution rule
$f:=\lambda x^{TB}.x$ of type $TB\arr TB$. Then, $\mu_B:=[f]:T(TB)\to
TB$ can be interpreted as doing the following: in a term whose free
variables have as names terms over $B$, those names are replaced by
themselves, but now seen as terms that are ``integrated'' into the
result term. In other words, $\mu_B$ removes the ``cross section''
between the trunk of the term and the term-like variable
leaves. Invoking Theorem~\ref{thm:mon2} for $(\Lam,\alpha,\gst{}{})$,
one obtains $\mu:=\gst{}{\identity_{(\Lam,\et{})}}:\Lam\cdot\Lam\arr\Lam$ as monad
multiplication on the monad of $\lambda$-terms, and the
above-mentioned parallel substitution can then be derived generically,
so as to obtain its components $\mu_B$ with the described
behaviour. In other words, the generic notion of monad multiplication
appears to have the behaviour of ``flattening'' a nested term
structure of type $T(TB)$ into one of type $B$ (for every $B$).  Now,
$\Lammu$ even has a term constructor, corresponding to the injection
of the last summand of the above equation into the left-hand side, and
so, the constructor is of type $\Lammu\cdot\Lammu\arr\Lammu$, which is
of the same type as the monad multiplication that is obtained by
invoking Theorem~\ref{thm:mon2} for
$(\Lammu,\alpha^\mu,\gst{}{}^{\mu})$. As a constructor, this operation
does \emph{not} denote the result of the flattening (here, even for
the extended syntax), but is a formal syntactic element and is thus
termed an ``explicit flattening''. Already in
\cite{DBLP:journals/tcs/MatthesU04}, it was shown that those explicit
flattenings can be resolved by evaluating any term with explicit
flattenings (from $\Lammu A$ for some $A$) into a term without
explicit flattenings (in $\Lam A$). We continue this case study by
using our extra categorical structure on substitution systems.

In the following, our goal is to construct a morphism of substitution systems from $\Lammu$ to $\Lam$.
This is not quite precise and needs refinement, since a priori, those two substitution systems are not in the same category.
More precisely, we are going to build a substitution system for the signature $\lcmu$, the underlying
carrier of which is the carrier $\Lam$.
To this end, we need to construct two ingredients:
firstly, we need a natural transformation $\mu^{\Lam} : H^{\Flatten}(\Lam) \to \Lam$ in order to obtain
a structure of $\constfunctor{\Id{}} + \lcmu$-algebra on $\Lam$.
Secondly, we equip this $\constfunctor{\Id{}} + \lcmu$-algebra with a bracket operation---which, of course, must be shown
compatible with the $\constfunctor{\Id{}} + \lcmu$-algebra structure in the sense of the
diagram of Definition \ref{def:subst-systems}.

Once this is done, we obtain, by initiality, a morphism of hss from the initial hss of $\lcmu$ to the newly constructed one,
the underlying algebra morphism of which is a morphism from $\Lammu$ to $\Lam$ that \enquote{does the right thing}:
mapping explicit substitution to substitution.

\begin{definition}[representation of flattening on $\Lam$]
  Let $\mu^{\Lam} : H^{\Flatten}(\Lam) \to \Lam$ be given by
  \[ \mu^{\Lam} := \gst{}{\identity_{\Lam}} : \Lam \cdot \Lam \to \Lam  \enspace . \]
\end{definition}

\begin{lemma}[substitution system of $\lcmu$ on $\Lam$]\label{hss_extend}

  The pair $(\Lam, \copair{\alpha}{\mu^{\Lam}})$ is an $\constfunctor{\Id{}} + \lcmu$-algebra.
  (Here, we have implicitly used associativity of the coproduct.)

  We define a bracket operation $\gst{}{}^{\Flatten}$ on this algebra by setting, for $(Z,e)$ and $f : (Z,e) \to (\Lam,\eta)$,
  \[
	\gst{}{f}^{\Flatten} := \gst{}{f} \enspace .
  \]
  This assignment
  yields a bracket operation on that algebra, and hence a substitution system
  $(\Lam,\copair{\alpha}{\mu^{\Lam}},\gst{}{}^{\Flatten})$ for the signature $\lcmu$.

\end{lemma}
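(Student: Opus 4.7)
The algebra structure on $\Lam$ for $\lcmu$ is immediate: $\alpha$ provides the $\lc$-part and $\mu^{\Lam} = \gst{}{\identity_{(\Lam,\eta)}}$ the $\Flatten$-part, since $H^{\Flatten}(\Lam) = \Lam \cdot \Lam$. Uniqueness of the $\lcmu$-bracket follows from uniqueness for $\lc$: the $\lcmu$-bracket equations split via the coproduct into those for $\lc$ and an additional equation for $\Flatten$, so every $\lcmu$-bracket is \emph{a fortiori} an $\lc$-bracket and hence equals $\gst{}{f}$. By the same splitting, the $\eta$-equation and the $\lc$-part of the $\tau$-equation for $\gst{}{f}$ are inherited from $(\Lam,\alpha,\gst{}{})$ being a substitution system for $\lc$. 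The substantive content is the $\Flatten$-part of the $\tau$-equation,
\[
  \gst{}{f} \circ \mu^{\Lam} \cdot Z \;=\; \mu^{\Lam} \circ H^{\Flatten}(\gst{}{f}) \circ \theta^{\Flatten}_{\Lam,(Z,e)}\enspace.
\]

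My plan for this equation is to route it through the monad structure on $\Lam$. First, apply Theorem~\ref{thm:mon2} to the $\lc$-substitution system $(\Lam,\alpha,\gst{}{})$ to obtain the monad $(\Lam,\eta,\mu^{\Lam})$ with the standard monad laws. The pivotal bridging fact is the Kleisli-extension formula
\[
  \gst{}{f} \;=\; \mu^{\Lam} \circ \Lam \cdot f\enspace,
\]
which I would establish by verifying that the right-hand side satisfies the two defining $\lc$-bracket equations and invoking uniqueness of $\gst{}{f}$: the $\eta$-equation falls out of naturality of $\eta$ and the unit law $\mu^{\Lam} \circ \eta \cdot \Lam = \identity_{\Lam}$, while the $\tau$-equation follows from the defining equation of $\mu^{\Lam}$ combined with naturality of $\theta^{\lc}$ in its pointed-endofunctor argument applied to $f$.

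With the Kleisli formula in hand, both sides of the $\Flatten$-equation reduce to $\mu^{\Lam} \circ \Lam \cdot \gst{}{f}$. The left-hand side reduces by commuting $\Lam \cdot f$ past $\mu^{\Lam} \cdot Z$ via naturality of $\mu^{\Lam}$ and then invoking associativity $\mu^{\Lam} \circ \mu^{\Lam} \cdot \Lam = \mu^{\Lam} \circ \Lam \cdot \mu^{\Lam}$. The right-hand side reduces because, by the Kleisli formula and the pointedness of $f$ (that is, $f \circ e = \eta$), the inner composite $\gst{}{f}_{\Lam(Z(A))} \circ \Lam(e_{\Lam(Z(A))})$ collapses to the identity via the other unit law $\mu^{\Lam} \circ \Lam \cdot \eta = \identity_{\Lam}$, again leaving $\mu^{\Lam} \circ \Lam \cdot \gst{}{f}$.

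I expect the main obstacle to be the Kleisli formula itself: setting up the correct types of natural transformations and invoking the right instance of naturality of $\theta^{\lc}$ requires care. Once it is in place, the remainder is essentially bookkeeping in horizontal compositions. The pointedness of $f$ is essential throughout: $f \circ e = \eta$ is precisely the condition that allows the strength $\theta^{\Flatten}$ to cancel against the bracket via the monad unit law.
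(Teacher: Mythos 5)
Your proposal is correct and follows the same skeleton as the paper's proof: the algebra structure is immediate, the bracket equations split along the coproduct of arities, the $\eta$-, $\App$- and $\Abs$-parts (together with uniqueness, which you rightly note reduces to uniqueness of the $\lc$-bracket since any candidate satisfies in particular the $\lc$-equations) are inherited from $(\Lam,\alpha,\gst{}{})$, and the only substantive obligation is compatibility with $\mu^{\Lam}$. For that last equation the paper's text gives no argument at all---it defers to the formalization---so the comparison can only be made against your own route, which checks out: the Kleisli-extension formula $\gst{}{f}=\mu^{\Lam}\circ\Lam\cdot f$ does follow from uniqueness of the $\lc$-bracket (the $\eta$-equation via naturality of $\eta$ and the unit law, the $\tau$-equation via the defining equation of $\mu^{\Lam}$ and naturality of $\th{}$ in the pointed argument), and with it the left-hand side of the $\Flatten$-equation becomes $\mu^{\Lam}\circ\Lam\cdot\gst{}{f}$ by naturality and associativity of $\mu^{\Lam}$, while on the right-hand side the composite $\gst{}{f}\cdot\Lam\cdot Z\circ\Lam\cdot e\cdot\Lam\cdot Z$ collapses to the identity using $f\circ e=\eta$ and the other unit law, leaving the same expression. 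This is a clean way to organize the calculation the paper omits, and it makes transparent that the monad structure on $\Lam$ (Theorem~\ref{thm:mon2}) is exactly what makes $\mu^{\Lam}$ a legitimate interpretation of explicit flattening.
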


\begin{proof}
 We need to show that $\gst{}{-}^{\Flatten}$ satisfies the equations of a bracket operation, see Definition \ref{def:subst-systems}.
 The diagrams can be checked for any \enquote{arity} individually, and
 for $\eta$, $\App$ and $\Abs$, the equations to check are exactly those satisfied by $\Lam$ as a substitution system for
 the signature $\lc$.
 The only non-trivial equation to check states that $\gst{}{-}^{\Flatten}$ is compatible with $\mu^{\Lam}$;
 we have to check that
 \[ \gst{}{f}^{\Flatten} \circ \mu^{\Lam}\cdot Z
    =
    \mu^{\Lam}\circ \Lam(\gst{}{f}^{\Flatten}) \circ \gst{}{f}^{\Flatten}\cdot \Lam\cdot Z \circ \Lam\cdot e\cdot \Lam\cdot Z\]
 We omit the details of this calculation here, and refer instead to
 the formal proof.
\end{proof}

 We thus have two objects in the category $\HSS(\lcmu)$, an initial object with underlying carrier $\Lammu$,
 and the object constructed in Lemma~\ref{hss_extend}, with underlying carrier $\Lam$.
 By initiality, we obtain a unique morphism of hss in this category.

\begin{definition}
  We call $\EVAL : \Lammu \to \Lam$ the morphism of substitution systems
  obtained by initiality.
  This map sends application and abstraction to themselves, respectively, and it sends
  the explicit flattening operator to its \enquote{evaluation}, that is, to a \enquote{flattened} term.
\end{definition}

  This morphism of hss gives rise, via functoriality of the monad construction (Theorem~\ref{thm:functor_hss_mon}),
  to a monad morphism; it is this morphism that is studied in Example 16 of \cite{DBLP:journals/tcs/MatthesU04}.
  Here, we have shown how that monad morphism arises from a morphism of substitution systems.

\section{About the formalization}\label{sec:formalization}

Most of the results presented in this article have been formalized, based on the
\UniMath library \cite{unimath}.
More precisely, all results except for
Theorem \ref{thm:hss_sat} and Lemmas \ref{lem:sub_sat} and \ref{lem:hss_replete} are proved in our formalization.

Our formalization started out as an independent repository, but has since been integrated
into \UniMath, as a package (subdirectory) called \texttt{SubstitutionSystems}.
The formalization can be inspected by cloning the \texttt{UniMath} repository on
Github, \url{https://github.com/UniMath/UniMath}, following the installation procedure
described there.

The \texttt{UniMath} library being under active development,
the organization of the packages is going to change: some code will be moved to
other, more fundamental, packages.
For the purpose of inspection of
the package \texttt{SubstitutionSystems} as described here,
it is hence convenient to stick with
a particular commit of the git repository, e.g., \texttt{commit 1ead81a}. % commit 1ead81ad1ab70356cbb1bc525e3bc585691adaec}.
The sections of this article roughly correspond to files in the formalization:
\begin{description}
 \item [\texttt{GenMendlerIteration.v}] corresponds to Section~\ref{sec:mendler};
 \item [\texttt{SubstitutionSystems.v}] corresponds to Section~\ref{sec:hss};
 \item [\texttt{MonadsFromSubstitutionSystems}] corresponds to Section~\ref{sec:monads};
 \item [\texttt{LiftingInitial.v}] corresponds to Section~\ref{sec:initiality}.
\end{description}

The code corresponding to Section~\ref{sec:explicit_flattening} is spread over several files:
\begin{description}
 \item [\texttt{SumOfSignatures.v}] corresponds to Lemma~\ref{lem:sum_of_signatures};
 \item [\texttt{LamSignature.v}] corresponds to Definitions~\ref{def:arity_app}, \ref{def:arity_abs}, \ref{def:arity_flat};
 \item [\texttt{Lam.v}] corresponds to the rest of Section~\ref{sec:explicit_flattening}.
\end{description}

To account for the evolution that is going to happen in the \UniMath library,
we provide an \enquote{interface} file \\
{\centering
\texttt{UniMath/SubstitutionSystems/SubstitutionSystems\textunderscore Summary.v}
}\\
containing pointers to the most important formalized theorems.

\subsection{Statistics}

Our library consists of a bit more than 4400 loc, plus 600 lines of comments%
\footnote{Note that the organization of the files is going to change over time, due to
reorganization of the library. In particular, contents may get moved to other parts of \UniMath
in the future.}.
Details are given in Table~\ref{table:hss_form}---numbers are taken from \texttt{commit 1ead81a}.
For comparison, for the same commit, the whole of \UniMath, including our library,
consists of about 37000 lines of code:
\begin{verbatim}
     spec    proof comments
    15053    22389     3987 total
\end{verbatim}

\begin{table}[]
\caption{Lines of code of the library \texttt{SubstitutionSystems}}\label{table:hss_form}
\begin{verbatim}
     spec    proof comments
       32       59       10 AdjunctionHomTypesWeq.v
       90      165      102 Auxiliary.v
       28       14        8 EndofunctorsMonoidal.v
       70      124       27 FunctorsPointwiseCoproduct.v
       70      113        7 FunctorsPointwiseProduct.v
       91      116       30 GenMendlerIteration.v
       28       21        7 HorizontalComposition.v
       79      407       72 LamSignature.v
      106      249       57 Lam.v
      236      518       61 LiftingInitial.v
      123      423       76 MonadsFromSubstitutionSystems.v
       26        0       12 Notation.v
       15        4        9 PointedFunctorsComposition.v
       36       61       11 PointedFunctors.v
       42       81       11 ProductPrecategory.v
       22        0       10 RightKanExtension.v
       82      211       40 Signatures.v
      155      326       53 SubstitutionSystems.v
       69      170       13 SumOfSignatures.v
     1400     3062      616 total
\end{verbatim}

\end{table}

\subsection{About performance: transparency vs.\ opacity}

One important aspect of computer proof assistants that are based on type theory is \fat{computation}.
Computation enables us to obtain some equalities for free.
For instance, in our formalization of (co)products in a functor category $[\C,\D]$ from (co)products in the target category $\D$,
the (co)product of two functors $F$ and $G$ \fat{computes} pointwise to the (co)product of the images, that is,
for instance $(F \oplus_{[\C,\D]} G)(c) \equiv Fc \oplus_{\D} Gc$.
Here, the notation $\equiv$ denotes definitional equality a.k.a.\ computation.
This is only true for a specific construction of (co)products in functor categories, of course;
in general, one can only expect $(F \oplus_{[\C,\D]} G)(c) \enspace \simeq_{\D} \enspace Fc \oplus_{\D} Gc$.
However, in order to keep the complexity of our proofs manageable for us, having
definitional equality instead of isomorphism was crucial.
We hence had to keep many category-theoretic constructions, such as (co)products in functor categories, \fat{transparent}.
Technically, this amounts to closing a proof using \texttt{Defined.} instead of \texttt{Qed.} in the \Coq proof assistant.

This lack of opacification, however, results in terms getting very large, making type checking more costly for the
machine. The transparency vs.~opacity issue can hence be restated as an issue of human vs.~machine friendliness.

Our approach to this issue was to opacify all the terms that we could afford opacifying,
either by moving them into lemmas by themselves, closing with \texttt{Qed.},
or by enclosing the corresponding sequence of tactics producing that term into an
\texttt{abstract (\ldots)} block.
The inconvenience of the latter method is that the block enclosed by \texttt{abstract} must be
\fat{one} tactic (composed using the chaining semicolon),
not a sequence of tactics. This method is hence only feasible for small subproofs.

Our library is quite slow to compile, due to the rather large proof terms arising when working with
rank 2 functors: some \texttt{Qed.} take very long to check.
A significant speedup was obtained in the file \texttt{MonadsFromSubstitutionSystems.v} by setting
the option \texttt{Unset Kernel Term Sharing.}, the workings of which are unknown to us.
However, this option proved useless or even increased compile time in other files, and is hence only
used in that one file.
It is unclear to us why this option is beneficial in that file and only there,
and whether there is a guiding principle saying when this option is useful.

In our library, there is a slight duplication of code:
the \UniMath library 
contains a proof that colimits lift to functor categories from the target category,
formalized by Ahrens and Mörtberg \cite{ahrens_mortberg}.
This result could in principle be applied to lift coproducts and products, both of which are formalized as specific colimits.
However, it turned out that this approach made typechecking unfeasibly slow:
indeed, the first files making use of coproducts in functor categories would stop compiling when that construction of
coproducts in functor categories was plugged in.
Instead, we provide a manual lifting of (co)products into functor categories in the files \texttt{FunctorsPointwiseProduct.v} and
\texttt{FunctorsPointwiseCoproduct.v}, with which typechecking is reasonably fast.
The latter construction applies similar principles of opacification as the general lifting of colimits; it is hence
unclear to us why the latter does perform so much better than the former.
We hope to clarify this issue in future work \cite{ahrens_mortberg}.

\section{Conclusions}\label{sec:conclusions}
 We presented, in a univalent foundation, some new results about the heterogeneous substitution
 systems introduced by Matthes and Uustalu \cite{DBLP:journals/tcs/MatthesU04},
 and showed how to obtain initial substitution systems (such as lambda calculi)
 from initial algebras using generalized iteration in Mendler-style.

  We have not studied the construction of initial algebras in univalent foundations;
  this is the subject of a forthcoming work by Ahrens and Mörtberg \cite{ahrens_mortberg}.

  Thanks to Paige North for discussion of the subject matter, and to
  Anders Mörtberg for providing feedback to a draft of this article.
  Thanks to the rest of the \UniMath team, for providing a sound base for formalization,
  and, specifically, to Dan Grayson and Anders Mörtberg for helping maintain the code described in this article.

\bibliographystyle{plain}
\bibliography{literature}

% \appendix

% \input{coq_substitution_body}

\end{document}